\colorlet{shadecolor}{yellow}
\newlength{\cmidruleWidth} 
\newtheorem{definition}{Definition}
\newtheorem{problem}{Problem}
\newtheorem{theorem}{Theorem}
\newtheorem{game}{Game}
\newtheorem{remark}{Remark}
\newcommand{\set}[1]{\left\{#1\right\}}
\DeclareMathOperator{\argmin}{argmin}
\DeclareMathOperator*{\argminlimits}{argmin}
\begin{document}

\bstctlcite{IEEEexample:BSTcontrol}

\title{
Hypergame-based Cognition Modeling and Intention Interpretation for Human-Driven Vehicles in Connected Mixed Traffic 
}

\author{
Jianguo Chen, Zhengqin Liu,  Jinlong Lei, Peng Yi, Yiguang Hong, \IEEEmembership{Fellow,~IEEE}, Hong Chen, \IEEEmembership{Fellow,~IEEE}%
\thanks{ Jianguo Chen is with the State Key Laboratory of Mathematical Sciences, Academy of Mathematics and Systems Science, Chinese Academy of Sciences, Beijing 100190, China, and also with the University of Chinese Academy of Sciences, Beijing 100049, China
{\tt\small (chenjianguo@amss.ac.cn)}}%
\thanks{ Zhengqin Liu and Hong Chen are with the Department of Control Science and Engineering, Tongji University, Shanghai 201804, China {\tt\small (2230709@tongji.edu.cn, chenhong2019@tongji.edu.cn)}}%
\thanks{
Jinlong Lei, Peng Yi  and Yiguang Hong are with the Department of Control Science and Engineering, Tongji University, Shanghai, 201804, China; also with State Key Laboratory of Autonomous Intelligent Unmanned Systems, and Frontiers Science Center for Intelligent Autonomous Systems, Ministry of Education, and the Shanghai Institute of Intelligent Science and Technology, Tongji University, Shanghai, 200092, China {\tt\small (leijinlong@tongji.edu.cn, yipeng@tongji.edu.cn, yghong@tongji.edu.cn)}
}%
\thanks{
Jianguo Chen and Zhengqin Liu contributed equally to this work.
}
\thanks{
This work has been submitted to the IEEE for possible publication. Copyright may be transferred without notice, after which this version may no longer be accessible.
}
\thanks{
© 2025 IEEE. Personal use of this material is permitted. Permission from IEEE must be obtained for all other uses, including reprinting/republishing this material for advertising or promotional purposes, collecting new collected works for resale or redistribution to servers or lists, or reuse of any copyrighted component of this work in other works.
}
}


\maketitle

\begin{abstract}
With the practical implementation of connected and autonomous vehicles (CAVs), the traffic system is expected to remain a mix of CAVs and human-driven vehicles (HVs) for the foreseeable future. To enhance safety and traffic efficiency, the trajectory planning strategies of CAVs must account for the influence of HVs, necessitating accurate HV trajectory prediction. Current research often assumes that human drivers have perfect knowledge of all vehicles' objectives, an unrealistic premise. This paper bridges the gap by leveraging hypergame theory to account for cognitive and perception limitations in HVs. We model human bounded rationality without assuming
them to be merely passive followers and propose a hierarchical cognition modeling framework that captures cognitive relationships among vehicles. We further analyze the cognitive stability of the system, proving that the strategy profile where all vehicles adopt cognitively equilibrium strategies constitutes a hyper Nash equilibrium when CAVs accurately learn HV parameters (Theorem \ref{thm:hne}). To achieve this, we develop an inverse learning algorithm for distributed intention interpretation via vehicle-to-everything (V2X) communication, which extends the framework to both offline and online scenarios. Additionally, we introduce a distributed trajectory prediction and planning approach for CAVs, leveraging the learned parameters in real time. Simulations in highway lane-changing scenarios demonstrate the proposed method's accuracy in parameter learning, robustness to noisy trajectory observations, and safety in HV trajectory prediction. The results validate the effectiveness of our method in both offline and online implementations.
\end{abstract}

\begin{IEEEkeywords}
connected mixed traffic, hypergame theory, multi-level cognition, intention interpretation
\end{IEEEkeywords}

%
\IEEEpeerreviewmaketitle


\section{Introduction}
\label{sec:introduction}

\IEEEPARstart{W}{ith} the practical implementation of CAVs, the traffic system is expected to remain a mix of CAVs and HVs for the foreseeable future \cite{LI2023104258,10648806}. To ensure road safety and improve traffic efficiency, CAVs must have the ability to accurately predict the trajectories of HVs. This capability urgently requires interpreting human drivers' intentions. 

Rule-based and learning-based methods are commonly used in HV modeling approaches in previous studies. Rule-based methods, such as \cite{gipps1981behavioural, treiber2000congested, newell2002simplified}, model the driving strategies of HVs in traffic flow as maintaining a constant speed and following the lead vehicle according to given rules. These methods provide a computationally simple and analyzable modeling approach for HVs' behavior, making them the most commonly used method in mixed traffic studies. However, since the rules are overly simplified compared to the decision-making processes of human drivers in reality, these methods struggle to accurately simulate trajectories in complex situations. Unlike the analytically focused rule-based methods, learning-based methods such as deep learning \cite{gao2023dual}, reinforcement learning \cite{zhang2018human,10541090}, and imitation learning \cite{9990591} learn driving strategies of human drivers directly from datasets of real HVs' trajectories. Due to the typically higher model complexity and a greater number of parameters in learning-based methods than in rule-based methods, they possess the capability to generate more complex driving behaviors. These methods are also frequently used to enable CAVs to make human-like decisions. However, both rule-based and learning-based methods lack consideration for the interaction patterns between HVs and CAVs. 

Thus, in this paper, we focus on game-theoretic methods \cite{yu2020multi}, modeling the decision-making processes of HVs and CAVs as a game problem. The decisions, i.e., the equilibrium of the game, are influenced by the utility functions and constraints of all vehicles, thereby explicitly constructing the impact of interactions. Recently, game-theoretic modeling of vehicle decision-making and interaction has gained increasing research attention, with advancements in the intention interpretation of agents in games. For example, \cite{mehr2023maximum} proposed the entropic cost equilibrium to characterize bounded rational decision-making in human interaction, and developed a maximum entropy inverse dynamic game algorithm to learn players' objective functions from trajectory datasets. In addition, \cite{peters2023online} proposed an intention interpretation algorithm based on a least-squares problem with Nash equilibrium constraints to calculate players' goals, state estimations, and trajectory predictions online. Most existing game-theoretic methods share a common flaw: they assume that human drivers understand the true objective functions of all HVs and CAVs. Yet, in reality, HVs do not precisely recognize CAVs' intentions \cite{9965180}. In previous studies considering the bounded rationality of HVs within game-theoretic frameworks, HVs are typically assumed to act as followers, reacting to the strategies of autonomous vehicles (AVs). For instance, in \cite{DI2020102710}, the AVs were modeled as the leader, while HVs were treated as followers. Similarly, in \cite{10127583}, brain-inspired modeling was employed to characterize HV behavior; however, the inputs to this model, such as trajectory tracking error and other observable external information, were predefined based on observed data.





Therefore, we extend this framework to a setting that accounts for the bounded rationality of HVs without assuming them to be merely passive followers. Faced with HVs with bounded rationality, CAVs need to identify the intentions of HVs through interactive trajectories so that they can plan trajectories more safely and efficiently. Because of the limited rationality of HVs and the uncertainty of CAVs about HVs' intentions, HVs and CAVs engage in a game based on their respective cognition rather than the same one, leading to a hypergame problem. Hypergame theory extends the traditional game theory to account for conflicts involving misperceptions. It allows for a game model incorporating differing perspectives, representing variations in each player's information, beliefs, and understanding of the game \cite{kovach2015hypergame,cheng2022misperception}. Based on the hypergame framework, this paper clearly characterizes the multi-level cognitive structure between HVs and CAVs. Then a Karush-Kuhn-Tucker (KKT)-based inverse game algorithm is proposed to estimate parameters in the objective functions of HVs. Subsequently, we design a collaborative intention interpretation mechanism between CAVs and the roadside unit (RSU), which coordinates computation via V2X communication. Finally, we conduct multiple simulations in highway lane-changing scenarios to evaluate the accuracy and safety of the proposed method. The main contributions of this paper are as follows:
\begin{itemize}
\item We model human bounded rationality by incorporating cognitive and perception limitations, and design a hierarchical cognition modeling framework using hypergames. This framework can effectively characterize the cognitive relationships among vehicles and their impact on decision-making processes.
\item We analyze the cognitive stability of vehicles by proving that the strategy profile, where all vehicles adopt cognitively optimal responses, constitutes a hyper Nash equilibrium when CAVs successfully learn the true parameters of HV (Theorem \ref{thm:hne}).
\item We propose inverse game-theoretic methods for distributed and vehicle-road collaborative intention interpretation, addressing both offline and online scenarios. Leveraging the hierarchical cognition model, we further develop a distributed trajectory prediction and planning process for CAVs.
\item Using simulations in both offline and online scenarios, we demonstrate the proposed method's robustness in parameter learning and its effectiveness in ensuring accurate and safe trajectory prediction, even under noisy observation conditions.
\end{itemize}

\textit{Notation}: $\boldsymbol{0}$ represents a zero vector; The operator $\operatorname{vec}(a_1,\dots,a_l)$ means joining column vectors or scalars $a_1,\dots,a_l$ into a vector $(a_1^\top,\dots,a_l^\top)^\top$; For a vector $x$ and a matrix $A$, $\|x\|^2_A=x^\top Ax$; $a\circ b$ denotes the Hadamard product of vectors $a$ and $b$; $[n]$ denotes the set $\set{1,\dots,n}$; The symbol $\oplus$ denotes the direct sum operation, which combines two matrices into a block diagonal matrix. To help readers, the frequently used symbols in this
article are listed in Table~\ref{tab:notation}.

\begin{table}[ht]
\label{tab:notation}
\caption{Explanation of Symbols}
\centering
\begin{tabular}{c | >{\centering\arraybackslash}m{6cm}} 
\hline
\textbf{Notation} & \textbf{Meaning} \\ \hline
$\mathcal{C}, \mathcal{N}$ & The sets of all connected and autonomous vehicles, and all vehicles, respectively. \\ \hline
$s_{i}$ & Decision variables for vehicle $i$. \\ \hline
$x_{\text{ref},i}$ & The reference trajectory of vehicle $i$. \\ \hline
$s_{i,j}$ & Decision variables of vehicle $i$ in vehicle $j$'s cognition. \\ \hline
$s_{(i,j),l}$ & Decision variables of vehicle $i$ as perceived by vehicle $j$, where vehicle $j$'s perception is further understood by vehicle $l$. \\ \hline
$s_{0,C}$ & HV’s strategy as perceived by CAVs. \\ \hline
\(\mathbf{s}_{-i}, \mathbf{s}_{\neg i}\) & The strategy profile of all other vehicles except \(i\); the strategy profile of all other CAVs for a CAV \(i \in \mathcal{C}\).\\ \hline
$\theta_i,\theta$ & The parameter vector of vehicle $i$, encoding weights from $Q_i$ and $R_i$;  the parameter vector for all vehicles, $\operatorname{vec}(\theta_i, i \in \mathcal{N})$.\\ \hline
$\theta_{j,i}$ & Vehicle $i$'s estimation of parameter $\theta_j$. \\   \hline
$\theta_{0,C}$ & HV’s parameter as perceived by CAVs. \\ \hline
$S_i(\mathbf{s}_{-i})$ & The strategy set of vehicle $i \in \mathcal{N}$, depending on other vehicles. \\ \hline
$J_i(s_i; \theta_i)$ & The objective function of vehicle $i$, representing its optimization target. \\ \hline
$h_i, g_i$ & Equality and inequality constraints for vehicle \(i\), respectively. \\ \hline
$\theta_{i,\text{true}}, \theta_{i,\text{ave}}$ & The true weight parameter and its average value for vehicle $i$, respectively. \\  \hline
$\epsilon_c, \epsilon_p$ & The cognitive threshold; the perceptual threshold. \\   \hline  
$G_{\text{true}}$ & The actual game shared by all players. \\   \hline
$G_{\text{true},i}$ & Vehicle $i$'s perception of the actual game $G_{\text{true}}$. \\   \hline
$\prescript{0}{}{H}$ & The level 0 hypergame, representing the game without misperceptions. \\   \hline
$\prescript{1}{}{H}$ & The level 1 hypergame, capturing subjective views of all players. \\   \hline  
$^1H_i$ & Level 1 hypergame perceived by vehicle $i$. \\ \hline
$^2H$ & Level 2 hypergame incorporating all players’ perceptions. \\ \hline
$\mathcal{T}_t$ & The time segment for time period $t$, where $\mathcal{T}_t = \{k_{t-1}, \dots, k_t\}$. \\ \hline
$G^t$ & The dynamic game during time period $t$. \\ \hline
$s_i^t$ & Strategy of vehicle $i$ during time period $t$. \\ \hline
$\theta_{0,C}^t$ & The CAVs' estimate of HV's parameter $\theta_{0,\text{true}}$ at time $t$. \\ \hline
$\hat{s}_0^{t-1}$ & Observed trajectory of HV in time period $t-1$. \\ \hline
$\tau$ & Number of sequential time periods in the prediction horizon. \\  \hline
$s_{0,C}^t$ & Predicted trajectory of HV by CAVs in period $t$. \\ \hline
\end{tabular}
\end{table}

\section{Trajectory Planning Games}
\label{sec:model_of_game}

We consider a road traffic scenario involving an RSU in the absence of traffic signals, where CAVs dominate the traffic system, while HVs are scarce. In this setup, all CAVs and the RSU communicate seamlessly through V2X technology, whereas HVs lack this communication capability \cite{OlaverriMonreal2016HumanFI}. In this section, we model the trajectory interactions between vehicles using game theory, formulating the problem as a Generalized Nash Equilibrium Problem. The objective function and strategy constraints of the model are explicitly defined. The proposed approach aligns with the framework presented in \cite{liu2024semi}, where similar game-theoretic methods are employed to model multi-agent interactions.   

We focus on the interaction patterns between HVs and nearby CAVs. Given the local dominance of CAVs, we specifically consider the most common scenario that involves a single HV interacting with multiple CAVs. Accordingly, this paper primarily investigates the interaction between one HV and \(n\) CAVs. Let \(\mathcal{C} = \{1, 2, \dots, n\}\) represent the set of CAVs, and \(\{0\}\) represent the HV. Then the set of all vehicles, \(\mathcal{C} \cup \{0\} = \{0, 1, \dots, n\}\), is denoted by \(\mathcal{N}\). Figure~\ref{fig:scenario_example} illustrates an example of this scenario, where the trajectories of the HV and CAVs are depicted as curves, and their predicted positions at five discrete future time steps are marked by dots.

\begin{figure}[H]
    \centering
    \includegraphics[width=\linewidth, keepaspectratio]{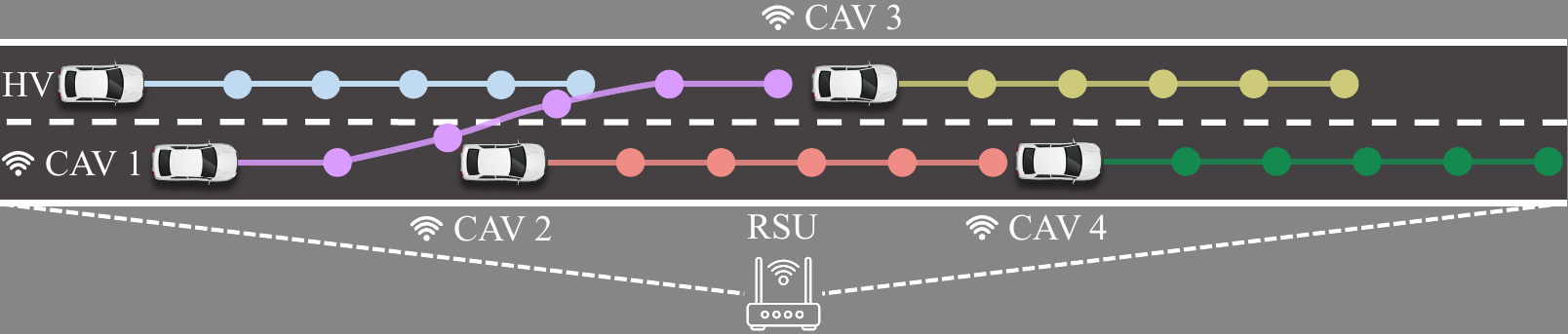}
    \caption{An example of the traffic scenario involving the interaction of one HV and four CAVs on a three-lane road. The trajectories of the vehicles are color-coded to represent their respective paths, while an RSU supports the coordinated maneuvers of CAVs.}
    \label{fig:scenario_example}
\end{figure}

\subsection{Objective Function}\label{subsec:obj_fun}

In this paper, we employ the widely used bicycle model as the basis for vehicle dynamics modeling \cite{Chen2021ANL,matute2019experimental}. The analysis is conducted in a discrete-time framework. Let \(\mathcal{T}\) denote the set of discrete time steps. For each vehicle \(i \in \mathcal{N}\), the state-control pair at time step \(k \in \mathcal{T}\) is denoted as \(s_i(k) = \operatorname{vec}(x_i(k), u_i(k))\), where \(x_i(k)\) represents the state variables and \(u_i(k)\) represents the control variables. The state vector is defined as \(x_i(k) = [p_{x,i}(k), p_{y,i}(k), v_i(k), \psi_i(k)]^\top\), encompassing the vehicle’s position, velocity, and heading angle. The control vector is given by \(u_i(k) = [a_i(k), \delta_i(k)]^\top\), which includes the acceleration and front wheel steering angle.  

Over the time horizon \(\mathcal{T}\), the complete strategy of vehicle \(i\) is represented as \(s_i = \operatorname{vec}(s_i(k), k \in \mathcal{T})\), excluding the initial state and terminal control at the boundaries of \(\mathcal{T}\). The strategy profiles of all other vehicles except \(i\) are denoted as \(\mathbf{s}_{-i} = \operatorname{vec}(s_j, j \in \mathcal{N} \backslash \{i\})\). For a CAV \(i \in \mathcal{C}\), the strategy profiles of other CAVs are denoted as \(\mathbf{s}_{\neg i} = \operatorname{vec}(s_j, j \in \mathcal{C} \backslash \{i\})\). The strategy set of vehicle \(i \in \mathcal{N}\) is denoted as \(S_i(\mathbf{s}_{-i})\), which depends on the strategies of other vehicles. Each vehicle aims to minimize its objective function \(J_i(s_i; \theta_i)\), subject to the feasible strategy set \(S_i(\mathbf{s}_{-i})\):  

\begin{equation}
    \begin{aligned}
        J_i(s_i; \theta_i) = \frac{1}{2}\sum_{k \in \mathcal{T}} \Big( & \|x_i(k+1) - x_{\text{ref},i}(k+1)\|^2_{Q_i} \\
        &+ \|u_i(k)\|^2_{R_i} \Big),
    \end{aligned}
    \label{eq:objective}
\end{equation}  
where \(x_{\text{ref},i}\) represents the reference trajectory of vehicle \(i\), and \(Q_i\) and \(R_i\) are diagonal positive definite weighting matrices for the state deviation and control effort, respectively.  

The parameter vector \(\theta_i = \operatorname{vec}\left(\operatorname{diag}(Q_i), \operatorname{diag}(R_i)\right)\) encodes the weights associated with \(Q_i\) and \(R_i\), characterizing the driving style of vehicle \(i \in \mathcal{N}\). The set of all possible parameter values is denoted by \(\Theta\), which is assumed to be bounded to ensure the driving style parameters remain within a finite and realistic range. Specifically, each \(\theta_i \in \Theta\) satisfies \(\theta_{\text{min}} \leq \theta_i \leq \theta_{\text{max}}\), where \(\theta_{\text{min}} > 0\) is the lower bound and \(\theta_{\text{max}}\) is the upper bound. For the entire system, the driving style parameters for all vehicles are collectively represented as \(\theta = \operatorname{vec}(\theta_i, i \in \mathcal{N})\). 

In this study, each CAV \(i \in \mathcal{C}\) is capable of directly sharing its decision variable \(s_i\) and reference trajectory \(x_{\text{ref},i}\) with other CAVs and the RSU. However, to safeguard the proprietary aspects of its trajectory planning algorithm, the weight parameter \(\theta_i\), which determines \(i\)'s driving behavior and style, is kept private and not shared. The estimation of the HV's reference trajectory \(x_{\text{ref},0}\) is beyond the scope of this work. Instead, we assume that the final target state of the HV is known to the CAVs. This assumption was widely used in related studies \cite{mehr2023maximum, peters2023online, fang2024cooperative}. The reference trajectory for the HV is generated using the same method applied to CAVs. Consequently, the objective function \(J_i\) for each CAV \(i\) is fully determined by its weight parameter \(\theta_i\). The true weight parameter of each vehicle \(i\) is denoted as \(\theta_{i,\text{true}} \in \Theta\).

\subsection{Constraints}\label{subsec:cons}

Next, we define the constraints \(S_i ,i\in \mathcal{N}\). These constraints incorporate both vehicle dynamics and safety requirements. The constraints include the following categories:

{\it (1) Dynamics Constraints}: The dynamics constraints are modeled using the bicycle model, as described in \cite{matute2019experimental}. The states of each vehicle include its position, velocity, and heading angle, while the controls consist of acceleration and front-wheel steering angle. Let \(L\) represents the vehicle length. The continuous-time dynamics are expressed as:
\begin{equation}
\left\{
  \begin{aligned}
      &\dot{p}_{x,i} = v_i\cos\psi_i, \\
      &\dot{p}_{y,i} = v_i\sin\psi_i, \\
      &\dot{v}_i = a_i, \\
      &\dot{\psi}_i = \frac{v_i\tan\delta_i}{L}.
  \end{aligned}
\right.
\label{eq:continuous_time_dynamics_equation}
\end{equation}
To ensure computational tractability, we adopt the linearized discrete-time approximation of \eqref{eq:continuous_time_dynamics_equation} as the dynamics constraints. This approximation maintains the model's fidelity while enabling efficient optimization.

{\it (2) Box Constraints}: The physical capabilities of each vehicle impose limits on its states and controls. Specifically, the velocity, acceleration, and front-wheel steering angle of vehicle \(i\) are constrained as follows:
\begin{equation*}
    \begin{aligned}
        & v_{i,\min} \leq v_i(k) \leq v_{i,\max}, \\
        & a_{i,\min} \leq a_i(k) \leq a_{i,\max}, \\
        & \delta_{i,\min} \leq \delta_i(k) \leq \delta_{i,\max}.
    \end{aligned}
\end{equation*}
These bounds ensure the feasibility and safety of vehicle behaviors under real-world operating conditions.

{\it (3) Lane constraints}: We set the constraint that the four vertices of a slightly larger concentric rectangle of the vehicle's plain view must be within the lane to prevent the vehicle from crossing the lane lines. Denote the rectangle's length and width as $L_E$ and $W_E$. The two-dimensional homogeneous coordinates of the rectangle vertex at the front left of the vehicle (denoted as point $A$) at time $k$ are
\begin{equation*}
    \begin{aligned}
        \tilde{\boldsymbol{p}}_{A,i}(k)=\big(& p_{x, i}(k)+\frac{L_E}{2} \cos (\psi_{i}(k))-\frac{W_E}{2} \sin (\psi_{i}(k)), \\
         & p_{y, i}(k)+\frac{L_E}{2} \sin (\psi_{i}(k))+\frac{W_E}{2} \cos (\psi_{i}(k)), 1 \big)^\top.
    \end{aligned}
\end{equation*}
Let $\ell\in\mathscr{L}$ denote the lane boundary index. At each time $k$, the lane boundary $\Gamma_\ell$ is linearized, i.e., a tangent is taken at the projection point of vehicle $i$'s position. Let the tangent's coefficients be $\boldsymbol{a}_{\ell,i}(k)=(a_{\ell,i}(k),b_{\ell,i}(k),c_{\ell,i}(k))$. Considering the positions of the four vertices $A, B, C, D$ of the rectangle, the lane constraint for vehicle $i$ at time $k$ is represented as
\begin{equation*}
    \begin{aligned}
        m_{\ell,i}(s_{i}(k))=(& \allowbreak \boldsymbol{a}_{\ell,i}(k) \tilde{\boldsymbol{p}}_{A,i}(k), \allowbreak \boldsymbol{a}_{\ell,i}(k) \tilde{\boldsymbol{p}}_{B,i}(k), \\ 
        & \boldsymbol{a}_{\ell,i}(k) \tilde{\boldsymbol{p}}_{C,i}(k), \allowbreak \boldsymbol{a}_{\ell,i}(k) \tilde{\boldsymbol{p}}_{D,i}(k) \allowbreak )^\top \allowbreak \leq \boldsymbol{0}.
    \end{aligned}
\end{equation*}

{\it (4) Collision avoidance constraints}: Let the vehicle width be $W$ and the diagonal length of the vehicle's plain view rectangle be $D$. The collision avoidance range is set as a super-ellipse $\frac{x^6}{(L / 2 + D / 2) ^ 6} + \frac{y ^ 6}{(W / 2 + D/ 2) ^ 6} = 1$. The coordinates of vehicle $j$ at time step $k$ in the reference frame with the center of vehicle $i$ as the origin and the direction of the vehicle's head as the $x$-axis are denoted as $(\check{p}_{x,j}(k),\check{p}_{y,j}(k))$. At this moment, the collision avoidance constraint of vehicle $i$ on vehicle $j$ is
\[
h_{i,j}(s_i(k),s_j(k))=1-\frac{(\check{p}_{x,j}(k))^6}{(\frac{L}{2} + \frac{D}{2}) ^ 6} - \frac{(\check{p}_{y,j}(k)) ^ 6}{(\frac{W}{2} + \frac{D}{2}) ^ 6}\leq 0.
\]

{\it (5) Driving behavior constraints}: We only impose driving behavior constraints on straight-driving and lane-changing vehicles. For a straight-driving vehicle $i$, the unit vector along the center line of its lane in the direction of vehicle $i$'s movement is denoted as $\boldsymbol{d}=(d_x,d_y)$. We impose an equality constraint that the heading angle must align with the direction of $\boldsymbol{d}$: $\psi_i(k)-\operatorname{atan2}(d_y,d_x)=0, \forall k = \mathcal{T}$. For a lane-changing vehicle $i$, its homogeneous coordinates at time $k$ are $\tilde{\boldsymbol{p}}_{i}(k)=\operatorname{vec}(p_{x,i}(k),p_{y,i}(k),1)$. The coefficients of the center line of its lane are denoted as $\boldsymbol{a}_{\ell_c,i}(k)=(a_{\ell_c,i}(k),b_{\ell_c,i}(k),c_{\ell_c,i}(k))$, where $\ell_c \in \mathscr{L}_c$ is the index of the lane center line. We constrain that during the lane-changing process, vehicle $i$ must be on the side of its lane center line closer to the target lane: $\boldsymbol{a}_{\ell_c,i}(k) \tilde{\boldsymbol{p}}_{i}(k) \leq 0$. This constraint ensures that vehicle $i$ avoids unnecessary opposite-direction steering during the lane-changing process.
\begin{remark}\label{re:line}
    For simplicity, all nonlinear constraints are linearized by retaining only the first-order terms in their Taylor expansion. The detailed linearization procedures are same as those outlined in \cite{liu2024semi}.
\end{remark}
Under Remark~\ref{re:line}, the set of constraints for vehicle \(i\) at time period \(\mathcal{T}\) can be compactly expressed as:
\begin{equation}\label{eq:cons}
    S_{i}\left(\mathbf{s}_{-i}\right)=\left\{s_{i} \mid h_{i}(s_{i},\mathbf{s}_{-i}) = \mathbf{0}, \, g_{i}(s_{i},\mathbf{s}_{-i}) \leq \mathbf{0} \right\},
\end{equation}
where \(h_i\) represents the linear equality constraints, and \(g_i\) denotes the linear inequality constraints. These constraints ensure the feasibility of the vehicle’s trajectory within the given operational limits.

\subsection{Game Model} \label{sec:game_model}
We model the interaction among vehicles as a generalized Nash equilibrium problem (GNEP), where each vehicle's strategy set depends on the strategies of the other vehicles \cite{facchinei2010generalized}. This interdependence arises from the coupled constraints, which reflect the joint influence of all vehicles in the system.

The game without misperceptions is formally defined as follows:  

\begin{game}  
The trajectory planning game without misperceptions between the HV and CAVs is represented by:  
\[
G_{\text{true}} = \left( \mathcal{N}, \{S_{i}(\mathbf{s}_{-i})\}_{i \in \mathcal{N}}, \{J_{i}(s_i; \theta_{i,\text{true}})\}_{i \in \mathcal{N}} \right),  
\]
where \(S_{i}(\mathbf{s}_{-i})\) represents the strategy set of vehicle \(i\), which depends on the strategies of all other vehicles \(\mathbf{s}_{-i}\) as defined in \eqref{eq:cons}, and \(J_{i}(s_i; \theta_{i,\text{true}})\) is the objective function of vehicle \(i\) with respect to its true parameter \(\theta_{i,\text{true}}\), which depends on its own strategy \(s_{i}\), as defined in \eqref{eq:objective}.
\label{de:level_0_hypergame}  
\end{game}

Then we introduce the concept of a GNE in the following definition.

\begin{definition}  
A strategy profile \(\{s_{i}^*\}_{i \in \mathcal{N}}\) is a GNE of \(G_{\text{true}}\) in Game~\ref{de:level_0_hypergame} if, for each \(i \in \mathcal{N}\), the following condition holds:  
\begin{equation*}  
    J_{i}(s_{i}^*; \theta_{i,\text{true}}) \leq J_{i}(s_{i}; \theta_{i,\text{true}}), \quad \forall s_{i} \in S_{i}\left( \mathbf{s}_{-i}^* \right),  
\end{equation*}  
where \(\theta_{i,\text{true}}\) represents the true driving style parameter of vehicle \(i\).  
\end{definition}

In this formulation, the GNE captures the strategic interdependence of the vehicles by accounting for the coupled constraints in their strategy sets. At equilibrium, no vehicle can unilaterally adjust its strategy to achieve a lower value of its cost function \(J_{i}\), given the strategies of all other vehicles. This concept is particularly suitable for analyzing interactions in mixed traffic scenarios, where vehicles must consider both their own objectives and the actions of others.

\section{Modeling Cognitive Structures among Vehicles under Hypergames}
\label{subsec:cognitive_structure}

In this section, we introduce a human driver model that accounts for bounded rationality which reflects the cognitive and perceptual limitations inherent in human drivers, enabling a more realistic analysis of mixed traffic scenarios. Building upon the human driver model, we propose a cognitive hierarchy model based on hypergames to describe the interactions between CAVs and HV. This model introduces the concept of subjective rationalizable strategies for vehicle agents at different cognitive levels, as well as the notion of Hyper Nash Equilibrium, providing a theoretical framework for analyzing decision-making processes in mixed traffic environments.

\subsection{Human Model}
To characterize the bounded rationality of human drivers, we define two critical concepts: cognitive limitation and  perceptual limitation. These concepts are essential for constructing a hypergame framework, where human drivers operate based on subjective interpretations of the game rather than the true game structure. This discrepancy is the cornerstone of the multi-level hypergame model introduced in this study.

\subsubsection{Cognitive Limitation}  
Human drivers exhibit inherent cognitive constraints that limit their ability to fully comprehend and optimize the driving objective function. These constraints arise from the inability to precisely evaluate all relevant parameters, such as the exact weights in the objective function. Consequently, human drivers simplify complex strategies into generalized categories, such as aggressive or conservative driving styles, to better navigate the driving environment \cite{10440182}. This behavior is consistent with the concept of bounded rationality, wherein decision-making is based on approximate reasoning rather than precise optimization. Studies like Lindorfer et al. \cite{Lindorfer2018ModelingTI} demonstrate how human drivers face estimation errors in perceiving environmental variables such as spacing and relative velocity, reinforcing the notion of generalized approximations. Similarly, earlier research on bounded rationality in driving behavior \cite{Lubashevsky2003RationaldriverAI,Lubashevsky2002BoundedRD} further supports this perspective.  

In our model, HVs are assumed to recognize only the general driving styles of CAVs rather than the precise weights in their cost functions. Specifically, an HV's understanding of the true weight parameter \(\theta_{i,\text{true}}\) for vehicle \(i\) is represented by an approximate value, \(\theta_{i,\text{ave}}\), which corresponds to the average weight associated with the perceived driving style of vehicle \(i\). For instance, these driving styles—such as those illustrated in Figure~\ref{fig:styles}—may broadly categorize behaviors as aggressive or conservative. This approximation indicates that HVs generalize the true weights \(\theta_{i,\text{true}}\) into typical values \(\theta_{i,\text{ave}}\), reflecting their limited perception.  

We assume that these average weights, referred to as typical weights, are common knowledge shared among HVs and CAVs. To quantify this cognitive limitation, we define the cognitive threshold (\(\epsilon_c\)), which captures the maximum cognitive gap between the true driving style parameter and its approximation:  
\[
\epsilon_c = \max_{i \in \mathcal{C}} \| \theta_{i,\text{true}} - \theta_{i,\text{ave}} \|.
\]  
This metric reflects the degree of deviation introduced by human drivers' limited cognition and their reliance on approximations, as depicted in Figure~\ref{fig:styles}.

\begin{figure}
    \centerline{\includegraphics[width=\linewidth, keepaspectratio]{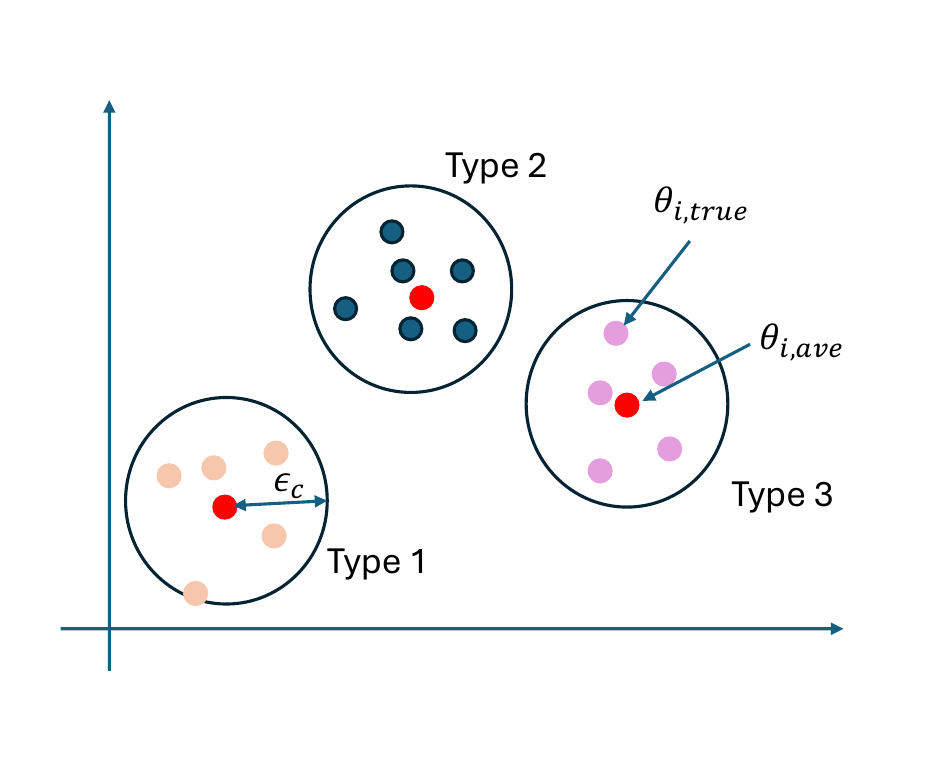}}
    \caption{Representation of Driving Styles, True and Average Weights, and the Cognitive Threshold (\(\epsilon_c\)). This illustrates the relationship between different driving styles (Type 1, Type 2, Type 3), their corresponding true weight parameters \(\theta_{i,\text{true}}\), the average weights \(\theta_{i,\text{ave}}\) associated with generalized driving styles, and the cognitive threshold (\(\epsilon_c\)). The red dots are \(\theta_{i,\text{ave}}\).}
    \label{fig:styles}
\end{figure}

\subsubsection{Perceptual Limitation}  \label{sec:Perceptual}  
Human drivers also exhibit perceptual limitations when responding to variations in their driving objective function. These limitations are characterized by insensitivity to small changes in stimuli, as supported by Lindorfer et al. \cite{Lindorfer2018ModelingTI}, who introduced the Enhanced Human Driver Model (EHDM). Their findings demonstrate that drivers tend to ignore minor perturbations in input stimuli unless these exceed a critical threshold, leading to threshold-driven decision-making. Wiedemann’s reaction sensitivity thresholds \cite{Wiedemann1974SIMULATIONDS} further support this behavior, describing how drivers respond only to perceptual changes that surpass specific thresholds.  

To model this limitation, we introduce the perceptual threshold (\(\epsilon_p > 0\)), which quantifies drivers' insensitivity to small variations in strategy efficacy. Formally, when the variation in the objective function value lies within the threshold \(\epsilon_p\),  
\[
J_{0}(s_{0}^* ; \theta_{0,\text{true}}) \leq J_{0}(s_{0} ; \theta_{0,\text{true}}) + \epsilon_p, \quad \forall s_{0} \in S_{0}\left( \mathbf{s}_{-0}^* \right),
\]  
an HV will not unilaterally deviate from its current strategy \(s_{0}^*\).  

This framework aligns with the concept of the \(\epsilon\)-Nash equilibrium, where deviations within \(\epsilon_p\) are considered negligible and do not impact decision-making. Studies like Noguchi et al. \cite{noguchi2007bayesian} and Miyazaki et al. \cite{miyazaki2013lambda} have demonstrated that agents with bounded rationality adapt and converge to \(\epsilon\)-Nash equilibria, which remain stable under slight perturbations. Similarly, Chen et al. \cite{chen2024approximation} proposed the notion of \(\epsilon\)-weakly Pareto-Nash equilibrium in multiobjective games, further capturing the effects of bounded rationality in decision-making.  

Empirical observations also support this modeling approach. For instance, Tan et al. \cite{Tan2022HumanMachineII} showed that drivers tend to disregard minor changes in stimuli, reacting only when changes exceed a noticeable threshold. Such findings reinforce the notion of a perceptual threshold, where small deviations are treated as inconsequential, ensuring stability in human drivers’ decision-making processes.  

\subsection{Hypergames}
For the HVs and CAVs sharing the same road, since they lack complete information about each other,  each of them has its own understanding of the game. Next, we present a framework for hierarchical hypergames based on the human model, along with the corresponding rationalizable strategies and the hyper Nash equilibrium. The cognitive structure of the HV and CAVs within the hypergame is illustrated in Fig. \ref{fig:cognition}. Now, we explain it in details.

\begin{figure}
    \centerline{\includegraphics[width=\linewidth, keepaspectratio]{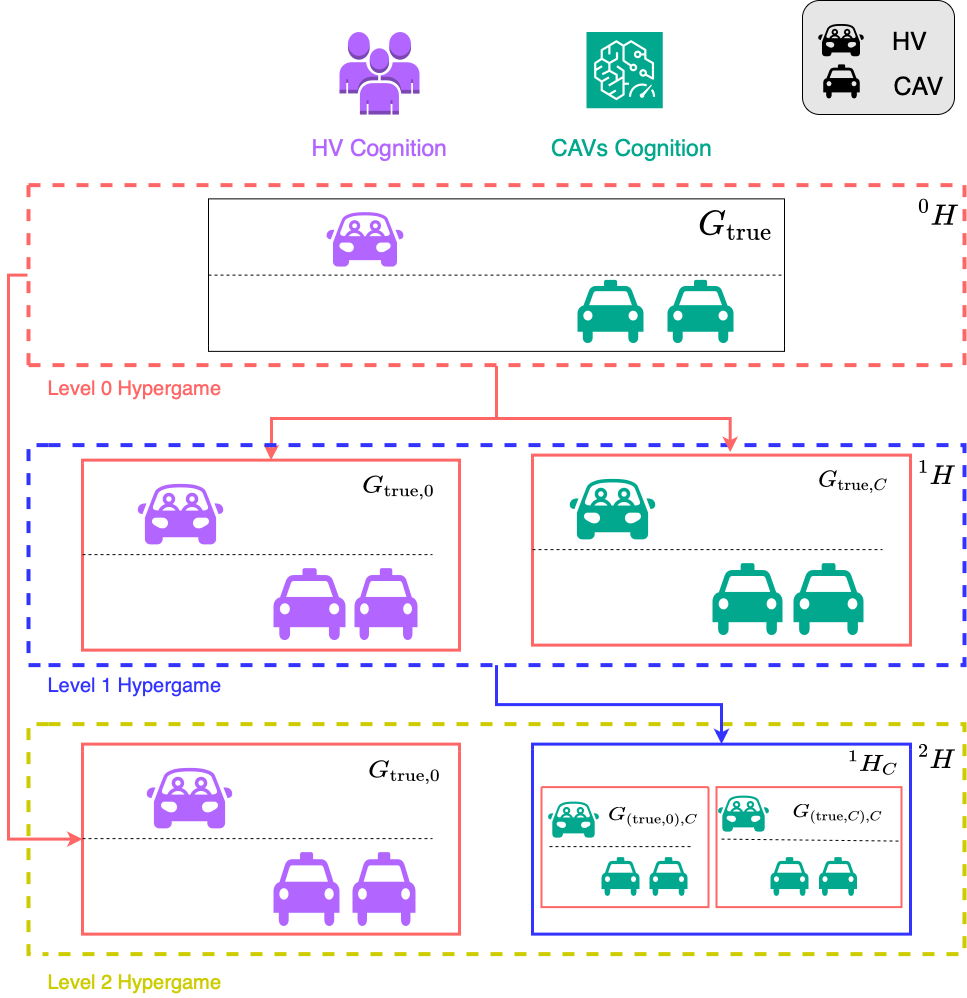}}
    \caption{The cognitive structure of the HV and CAVs in the hypergame. The dotted-line boxes represent different levels of hypergames: zero level, first level, and second level. On the left side, the solid line boxes indicate HV' cognition of the game at each level of the hypergame, while the right side represents CAVs' cognition, denoted by vehicles of the same color. The box pointed to by the arrow indicates the player's overall understanding of the game at a lower-level hypergame within the context of the current higher-level hypergame, as indicated by boxes and arrows of the same color.}
    \label{fig:cognition}
\end{figure}

\subsubsection{Level 0 and Level 1 Hypergames}  
For any \( i \in \mathcal{N} \), let \( G_{\text{true},i} \) represent vehicle \( i \)'s perception of \( G_{\text{true}} \), the actual game defined in Section \ref{sec:game_model}. To formalize parameter perception, define \( \theta_{j,i} \) as vehicle \( i \)'s estimation of \( \theta_j \), the parameter associated with vehicle \( j \), for all \( i, j \in \mathcal{N} \). Notably, \( \theta_{i,i} = \theta_{i,\text{true}} \), indicating that each vehicle \( i \in \mathcal{N} \) has perfect knowledge of its own parameter. Additionally, as to be explained in Remark~\ref{rem:same}, it follows that \( G_{\text{true},i} = G_{\text{true},j} \) and \( \theta_{l,i} = \theta_{l,j} \) for any \( i, j \in \mathcal{C} \) and \( l \in \mathcal{N} \).  

\begin{remark}\label{rem:same}  
    Since CAVs communicate seamlessly via V2X, their understanding of the game is assumed to be identical. Consequently, this work focuses primarily on the cognitive interplay between HV and the collective CAVs. For clarity, Figure~\ref{fig:cognition} consolidates the CAVs into a unified representation.  
\end{remark}  

In the red dashed box in Figure~\ref{fig:cognition}, the level 0 hypergame, denoted as \(\prescript{0}{}{H}\), represents the baseline game without cognitive discrepancies, defined as \(G_{\text{true}}\) in Game~\ref{de:level_0_hypergame}. While the level 1 hypergame accounts for the subjective perspectives of each player, where they perceive their own versions of the level 0 game but remain unaware of the perceptions held by others. Each player \( i \in \mathcal{N} \) interprets the game as \( G_{\text{true},i} \).  

As depicted in the blue dashed box in Figure~\ref{fig:cognition}, the level 1 hypergame is formalized as a tuple \(\prescript{1}{}{H} = \{G_{\text{true},i}, i \in \mathcal{N}\}\). Given the bounded rationality inherent in human cognition, the specific structure of \( G_{\text{true},0} \), representing the HV's perception of the game, is further elaborated in Game~\ref{de:hv_cognition}.  

\begin{game}
The game perceived by the HV, denoted as Game~\ref{de:level_0_hypergame}, is given by  
\[
G_{\text{true},0} = \left( \mathcal{N}, \{S_{i}(\mathbf{s}_{-i})\}_{i \in \mathcal{N}}, \{J_{i}(s_i; \theta_{i,0})\}_{i \in \mathcal{N}} \right),
\]
where the parameter \(\theta_{i,0}\) represents the HV’s understanding of the parameter \(\theta_{i,\text{true}}\) of vehicle \( i \). Specifically, \(\theta_{i,0} = \theta_{i,\text{ave}}\) for any \( i \in \mathcal{C} \), and \(\theta_{0,0} = \theta_{0,\text{true}}\).
\label{de:hv_cognition}
\end{game}

In the level 1 hypergame, the HV predicts the trajectories of CAVs and plans its own trajectory based on Game~\ref{de:hv_cognition}. The concept of a subjective rationalization strategy for the HV is formalized as follows.

\begin{definition}\label{def:hv_stra}
For the HV, a strategy \( s_{0}^* \) is said to be a subjective rationalization strategy if it forms part of a generalized Nash equilibrium (GNE) of \( G_{\text{true},0} \). This implies the existence of \(\{ s_{i,0}^* \}_{i \in \mathcal{C}}\) such that
\begin{equation*}
    \begin{aligned}
        J_{0}(s_{0}^*; \theta_{0, \text{true}}) \leq J_{0}(s_{0}; \theta_{0, \text{true}}), \quad \forall s_{0} \in S_{0}\left( \operatorname{vec}(s_{j,0}^*, j \in \mathcal{C}) \right), \\
        J_{i}(s_{i,0}^*; \theta_{i, \text{ave}}) \leq J_{i}(s_{i,0}; \theta_{i, \text{ave}}), \quad \forall s_{i,0} \in S_{i}\left( \operatorname{vec}(\mathbf{s}_{\neg i,0}^*, s_{0}^*) \right), \\
        \forall i \in \mathcal{C}.
    \end{aligned}
\end{equation*}
\end{definition}

Definition~\ref{def:hv_stra} signifies that within the HV's cognition, it perceives no benefit in unilaterally deviating from its chosen strategy \( s_{0}^* \), given its predictions of CAV behavior.

\subsubsection{Level 2 Hypergame}
In a level 2 hypergame, at least one player recognizes that different games are played due to the presence of misperceptions. In this study, we assume that CAVs are aware of these differing games, as they account for the cognition of HV.

Multiple superscripts are used to denote multiple levels of cognition. Each index represents the cognition of the entire variable to its left. For instance, $G_{(\text{true}, i), j}$ represents the second-order cognition of $\mathcal{G}$. Here, vehicle $i$ first forms an understanding of $\mathcal{G}_{\text{true}}$ as $\mathcal{G}_{\text{true}, i}$, and subsequently, vehicle $j$ develops an understanding of vehicle $i$'s cognition. Similarly, $\theta_{(i, j), l}$ represents the second-order cognition of vehicle $i$'s parameter $\theta_i$, where vehicle $j$ first perceives $\theta_i$ as $\theta_{i,j}$, and subsequently, vehicle $l$ understands vehicle $j$'s perception.

When CAVs are aware that HVs are playing a different game in a level 2 hypergame, CAV $j$'s perception of Game~\ref{de:hv_cognition} is given as follows:

\begin{game}\label{de:hv_cognition_in_cavs_cognition}
The CAV $j \in \mathcal{C}$'s perception of Game~\ref{de:hv_cognition} is
\[
G_{(\text{true},0),j} = \left( \mathcal{N}, \set{S_{i}(\mathbf{s}_{-i})}_{i\in \mathcal{N}}, \set{J_{i}(s_i;\theta_{(i,0),j})}_{i\in \mathcal{N}} \right),  
\]
where $\theta_{(i,0),j}$ represents CAV $j$'s understanding of $\theta_{i,0}$, which is HV's perception of the parameter $\theta_{i,\text{true}}$ of vehicle $i$. Specifically, $\theta_{(i,0),j} = \theta_{i,\text{ave}}$ and $\theta_{(0,0),j} = \theta_{0,j}$ for $i, j \in \mathcal{C}$.
\end{game}

According to Remark~\ref{rem:same}, all CAVs share the same perception of HVs, so $\theta_{0,i} = \theta_{0,j}$ for any $i, j \in \mathcal{C}$. We denote this shared perception as $\theta_{0,C}$. Furthermore, in CAVs' perception, HV's subjective rationalization strategy is consistent, denoted as $s_{0,C}$, implying that HV will not unilaterally deviate from this strategy. Based on $G_{(\text{true},j),i}$ for $j \in \mathcal{N}$, this leads to the subjective rationalization strategy for CAVs defined below:

\begin{definition}\label{def:cav_str}
For CAVs, a strategy profile $\set{s_{i}^*}_{i \in \mathcal{C}}$ is said to be a subjective rationalization strategy if there exists $s_{0,C}$, the subjective rationalization strategy of HV in Game~\ref{de:hv_cognition_in_cavs_cognition}, such that for any $i \in \mathcal{C}$:
    \begin{equation*}
        \begin{aligned}
            J_{i}(s_{i}^* ; \theta_{i, \text{true}}) \leq J_{i}(s_{i} ; \theta_{i, \text{true}}), \forall s_{i} \in S_{i}\left( \operatorname{vec}(\mathbf{s}_{\neg i}^*,s_{0,C}) \right).
        \end{aligned}
    \end{equation*}
\end{definition}

The subjective rationalization strategy for CAVs ensures that no CAV unilaterally changes its strategy in their perceived game. The level 1 hypergame, ${}^1H$, perceived by CAV $i \in \mathcal{C}$ is defined as ${}^1H_{i} = \{G_{(\text{true},j),i}, j \in \mathcal{N}\}$, where $G_{(\text{true},j),i}$ is as described in Game~\ref{de:hv_cognition_in_cavs_cognition}. The level 2 hypergame is then defined as follows:

\begin{game}\label{game:hpyer}
    The level 2 hypergame is a tuple \({}^2H = \{G_{\text{true},0}, {}^1H_{i}, i \in \mathcal{C} \}\), where $G_{\text{true},0}$ and ${}^1H_{i}$ are as defined above.
\end{game}

Game~\ref{game:hpyer} encapsulates the differing cognitive perspectives between HVs and CAVs in the level 2 hypergame context, assuming that each player acts rationally based on their own cognition. This leads to the concept of an Hyper Nash Equilibrium (HNE).

\begin{definition}\label{def:hne}
A strategy profile $\mathbf{s}^*$ is an HNE in the game ${}^2H$ if $\set{s_{i}^*}_{i \in \mathcal{C}}$ is the subjective rationalization strategy of CAVs defined in Definition~\ref{def:cav_str}, and $s_{0}^*$ is the subjective rationalization strategy of HV defined in Definition~\ref{def:hv_stra}, satisfying:
    \begin{equation*}
        \begin{aligned}
            &J_{i}(s_{i}^* ; \theta_{i,\text{true}}) \leq J_{i}(s_{i} ; \theta_{i,\text{true}}), \forall s_{i} \in S_{i}\left( \mathbf{s}_{-i}^* \right), \forall i \in \mathcal{C}, \\
            &J_{0}(s_{0}^* ; \theta_{0,\text{true}}) \leq J_{0}(s_{0} ; \theta_{0,\text{true}}) + \epsilon_p, \forall s_{0} \in S_{0}\left( \mathbf{s}_{-0}^* \right),
        \end{aligned}
    \end{equation*}
where $\epsilon_p$ is the perceptual threshold given in Section~\ref{sec:Perceptual}.
\end{definition}

In essence, an HNE is a strategy profile where each player is playing their best response within their respective subjective game, which is formed based on their perception of the overall situation. In this equilibrium, no player would unilaterally deviate from their current strategy, as doing so would not provide them with any additional benefit under their subjective understanding of the game. Furthermore, this equilibrium reflects a state of cognitive stability, as players do not have an incentive to alter their perception of the game itself. In other words, at an HNE, players not only achieve strategic stability by optimizing their actions but also maintain consistency in their mental models of the game. This dual stability ensures that players are aligned with their perceived realities, making the HNE a robust solution concept in hypergames \cite{Cheng2021SingleLeaderMultipleFollowersSS,xu2024consistency}

\section{Cognitive Stability Analysis}\label{sec:stability}

In this section, we consider a refined solution concept of GNE, namely the variational equilibrium. We establish the conditions under which the rationalizable strategies of the players constitute an HNE, assuming that CAVs have knowledge of the true objective function parameters of HV, which provides a cognitive stability analysis of the proposed model.

We first define the strategy profile excluding the strategy of HV as \(\mathbf{s} = \operatorname{vec}(s_i, \, i \in \mathcal{C})\), and the pseudo-gradient as  
\[
\mathcal{J}(\mathbf{s}; \theta) = 
\begin{bmatrix}
\nabla_{s_1} J_1(s_1; \theta_1) \\
\nabla_{s_2} J_2(s_2; \theta_2) \\
\vdots \\
\nabla_{s_n} J_n(s_n; \theta_n)
\end{bmatrix}.
\]
Specifically, the gradient of the cost function \( J_i(s_i; \theta_i) \) with respect to \( s_i \) is given by
\begin{equation}\label{eq:def_nabla_J}
    \nabla_{s_i} J_i(s_i; \theta_i) = \Bar{\theta_i}(s_i - s_{\text{ref},i}),
\end{equation}
where
\begin{equation}\label{eq:def_hat_theta}
    \Bar{\theta_i} = R \oplus \underbrace{Q \oplus R \oplus Q \oplus \cdots \oplus R}_{T-2 \text{ alternating } Q \text{ and } R} \oplus Q
\end{equation}
is a diagonal matrix of size \( 6(|T|-1) \times 6(|T|-1) \). Here, \( s_{\text{ref},i} \) denotes a reference trajectory vector aligned with \( s_i \), whose elements are defined as follows: the element corresponding to the state \( x_i \) in \( s_i \) is set to \( x_{\text{ref},i} \), while the elements corresponding to the control inputs \( u_i \) in \( s_i \) are set to zero.

Since we consider only the linear form of all constraints in Remark~\ref{re:line},
according to Lemma 2 of \cite{liu2024semi}, we know that given the strategy $s_0$ of HV, there exists a closed convex set \(K(s_0)\) such that for all \(i \in \mathcal{C}\),
\[
S_i(\mathbf{s}_{\neg i}) = \{ s_i \mid (s_i, \mathbf{s}_{\neg i}) \in K(s_0) \}.
\]

Given the strategy $s_0$ of HV and the parameter $\theta$ in cost functions, we define the strategy profile \(\mathbf{s}^* \in K(s_0)\) as a Variational Equilibrium (VE) if it satisfies the following variational inequality:  
\begin{equation}\label{eq:vepro}
    \langle \mathcal{J}(\mathbf{s}; \theta), \mathbf{s} - \mathbf{s}^* \rangle \geq 0, \quad \forall \mathbf{s} \in K(s_0).
\end{equation}
\noindent This condition guarantees that no player can improve their objective by unilaterally deviating from the strategy, ensuring the stability of the strategy profile.

\begin{remark}\label{rm:use_ve}
    According to Theorem 4.8 in \cite{facchinei2010generalized}, if \(\mathbf{s}^*\) is a VE satisfying \eqref{eq:vepro}, it is also a generalized Nash equilibrium (GNE). Furthermore, VE serves as a refinement of the GNE, making it a more preferred concept for equilibrium analysis \cite{kulkarni2012variational}. In the game-theoretical trajectory interaction solutions of vehicles, the VE is an interaction-fair GNE, meaning that both vehicles bear the same rate of payoff decrease to avoid collisions \cite{liu2024semi}. Therefore, we simplify the analysis of cognitive stability by focusing on the stability of the VE in this section. This approach enables a more precise understanding of cognitive stability in the context of the hypergame framework.
\end{remark}

As described in Remark~\ref{rm:use_ve}, we only use VE as the solution of the trajectory game in this section. The following theorem establishes a sufficient condition for achieving an HNE within the hypergame framework.

\begin{theorem}\label{thm:hne}
    Under the cognitive threshold $\epsilon_c$, if the CAVs can observe the true parameters of the HVs $\theta_{0,\text{true}}$, then the subjectively rationalized strategy profile $\{s_i^*\}_{i \in \mathcal{C}} \cup \{s_0^*\}$ of the CAVs and HV forms an HNE under the perceptual threshold $L\epsilon_c$, where $L$ is a positive constant.
\end{theorem}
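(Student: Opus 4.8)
The plan is to verify the two defining inequalities of the HNE in Definition~\ref{def:hne} separately: the \emph{exact} best-response condition for each CAV, and the \emph{$\epsilon_p$-approximate} best-response condition for the HV with $\epsilon_p = L\epsilon_c$. The hypothesis $\theta_{0,C}=\theta_{0,\text{true}}$ delivers the CAV condition almost for free, so essentially all the work lies in the HV condition, which I reduce to a sensitivity estimate quantifying how the HV's cognitive error in the CAV parameters ($\theta_{i,\text{ave}}$ instead of $\theta_{i,\text{true}}$) propagates into the suboptimality of its own subjectively rationalized trajectory.

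First I would settle the CAV condition. When the CAVs observe the true value we have $\theta_{0,C}=\theta_{0,\text{true}}$, so comparing Game~\ref{de:hv_cognition_in_cavs_cognition} with Game~\ref{de:hv_cognition} shows $G_{(\text{true},0),j}=G_{\text{true},0}$ for every $j\in\mathcal{C}$: the two games carry identical players, constraints, and objective parameters ($\theta_{0,\text{true}}$ for the HV and $\theta_{i,\text{ave}}$ for each CAV). Invoking VE as the solution concept throughout (Remark~\ref{rm:use_ve}), the full pseudo-gradient is strongly monotone because each block $\bar\theta_i\succeq\theta_{\min}I\succ 0$, so the VE is unique and the two games share the same equilibrium; hence the HV strategy $s_{0,C}$ predicted inside the CAVs' cognition coincides with the HV's actual subjectively rationalized strategy $s_0^*$. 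Substituting $s_{0,C}=s_0^*$ into Definition~\ref{def:cav_str} and using $\operatorname{vec}(\mathbf{s}_{\neg i}^*,s_0^*)=\mathbf{s}_{-i}^*$ yields exactly $J_i(s_i^*;\theta_{i,\text{true}})\le J_i(s_i;\theta_{i,\text{true}})$ for all $s_i\in S_i(\mathbf{s}_{-i}^*)$ and all $i\in\mathcal{C}$, which is the first HNE inequality.

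Next I would attack the HV condition through a VE-perturbation argument. The key observation is that the CAV strategies predicted by the HV, $\{s_{i,0}^*\}_{i\in\mathcal{C}}$, and those actually played, $\{s_i^*\}_{i\in\mathcal{C}}$, are both VEs of the \emph{same} CAV subgame over $K(s_0^*)$, differing only through the parameter used ($\theta_{i,\text{ave}}$ versus $\theta_{i,\text{true}}$). Writing the variational inequality~\eqref{eq:vepro} for each solution and testing it at the other's point, then adding, I can exploit the affine structure of the pseudo-gradient, whose $i$-th block is $\bar\theta_i(s_i-s_{\text{ref},i})$ by~\eqref{eq:def_nabla_J}, to split the resulting cross term into a strongly monotone part and a parameter-mismatch part. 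Since every $\bar\theta_i\succeq\theta_{\min}I$, strong monotonicity with modulus $\theta_{\min}$ lets me cancel one factor of the solution gap and bound the remainder by $\|\theta_{i,\text{true}}-\theta_{i,\text{ave}}\|\le\epsilon_c$ times the quantity $\|\mathbf{s}^*-\mathbf{s}_{\text{ref}}\|$, which is bounded because $\Theta$ and the feasible region are bounded. This gives a Lipschitz estimate $\|s_i^*-s_{i,0}^*\|\le C\epsilon_c$ with $C$ depending only on $\theta_{\min}$ and the trajectory bounds.

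Finally I would convert this trajectory gap into the $\epsilon_p$-optimality of $s_0^*$. The HV cost $J_0(\cdot;\theta_{0,\text{true}})$ depends only on $s_0$, and $s_0^*$ is its exact minimizer over the perceived feasible set $S_0(\operatorname{vec}(s_{j,0}^*))$, whereas Definition~\ref{def:hne} compares against the actual set $S_0(\operatorname{vec}(s_j^*))=S_0(\mathbf{s}_{-0}^*)$. Under the linearization of Remark~\ref{re:line} both sets are polyhedra whose defining data, through the collision-avoidance constraints $h_{0,j}$, depend affinely on the CAV positions, so $\|s_i^*-s_{i,0}^*\|\le C\epsilon_c$ forces the two constraint systems to differ by $O(\epsilon_c)$. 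A parametric sensitivity bound for the strongly convex quadratic program then shows that the perceived optimal value exceeds the actual one by at most $L\epsilon_c$, i.e.\ $J_0(s_0^*;\theta_{0,\text{true}})\le J_0(s_0;\theta_{0,\text{true}})+L\epsilon_c$ for all $s_0\in S_0(\mathbf{s}_{-0}^*)$, which is the second HNE inequality with $\epsilon_p=L\epsilon_c$. I expect the main obstacle to be exactly this last step: because the HV's feasible set is itself a function of the perturbed CAV positions through the coupled collision constraints, I must invoke a constraint qualification (such as MFCQ or Slater) uniformly along the perturbation to secure a single constant $L$, and verify that the perceived optimizer remains admissible in the actual set up to an $O(\epsilon_c)$ feasibility restoration.
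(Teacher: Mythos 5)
Your proposal is correct and follows essentially the same route as the paper's proof: the CAV condition via $\theta_{0,C}=\theta_{0,\text{true}}\Rightarrow s_{0,C}=s_0^*$, then the HV condition by bounding $\|s_i^*-s_{i,0}^*\|$ through Lipschitz dependence of the (unique, strongly monotone) VE on $\theta$, and finally a Lipschitz estimate for the optimal value of the HV's program under perturbation of its coupled feasible set. The only difference is presentational: where you carry out the VI perturbation argument and the value-function sensitivity bound by hand (and rightly flag the constraint-qualification issue), the paper simply cites external results for both steps.
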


\begin{proof}
    To prove the theorem, we first show that the function $\mathcal{J}(\mathbf{s}; \theta)$ is strongly monotone in $\mathbf{s}$ and Lipschitz continuous in both $\mathbf{s}$ and $\theta$. Define $\mathcal{\hat{J}}(\mathbf{\hat{s}}; \theta)=\mathcal{J}(\mathbf{s}+\mathbf{s}_{\text{ref}}; \theta)$, where $\mathbf{s}_{\text{ref}} = \operatorname{vec}(s_{\text{ref},i},i \in \mathcal{C})$. Then $\mathcal{\hat{J}}(\mathbf{\hat{s}}; \theta) = \Bar{\theta}\mathbf{\hat{s}}$, where $\Bar{\theta} = \oplus_{i \in \mathcal{C}}\Bar{\theta}_i$ is a diagonal matrix ($\Bar{\theta}_i$ is defined in \eqref{eq:def_hat_theta}). Therefore, $\mathbf{s}^*$ is the VE of \eqref{eq:vepro} if and only if $\mathbf{\hat{s}}^* = \mathbf{s}^* - \mathbf{s}_{\text{ref}}$ is the solution of the following variational inequality:
    \begin{equation}\label{eq:vepro2}
    \langle \mathcal{\hat{J}}(\mathbf{\hat{s}}; \theta), \mathbf{\hat{s}} - \mathbf{\hat{s}}^* \rangle \geq 0, \quad \forall \mathbf{\hat{s}} \in K(s_0) - \mathbf{s}_{\text{ref}},
    \end{equation}
    where $K(s_0) - \mathbf{s}_{\text{ref}} = \{ \mathbf{\hat{s}} - \mathbf{s}_{\text{ref}} \mid \mathbf{\hat{s}} \in K(s_0) \}$ is also a closed convex set.
    
    First, because there exists a lower bound  $\theta_{\min} > 0$ for every possible parameters in cost functions  as described in Subsection~\ref{subsec:obj_fun}, we have that for each $\theta \in \Theta$,
    \begin{equation*}
        \begin{aligned}
            &\langle \mathcal{\hat{J}}(\mathbf{\hat{s}}; \theta) - \mathcal{\hat{J}}(\mathbf{\hat{s}}'; \theta), \mathbf{\hat{s}} - \mathbf{\hat{s}}' \rangle \\
            = & \| \mathbf{\hat{s}} - \mathbf{\hat{s}}' \|_{\Bar{\mathbf{\theta}}}^2 \\
            \geq & \theta_{\min} \| \mathbf{\hat{s}} - \mathbf{\hat{s}}' \|^2, \quad \forall \mathbf{\hat{s}}, \mathbf{\hat{s}}' \in  K(s_0) - \mathbf{s}_{\text{ref}}.
        \end{aligned}
    \end{equation*}
    Thus, we obtain that $\mathcal{\hat{J}}(\mathbf{\hat{s}}; \theta)$ is strongly monotone with respect to $\mathbf{\hat{s}}$. Similarly, since there exists an upper bound $\theta_{\max} > 0$, we have 
    \[
        \langle \mathcal{\hat{J}}(\mathbf{\hat{s}}; \theta) - \mathcal{\hat{J}}(\mathbf{\hat{s}}'; \theta), \mathbf{\hat{s}} - \mathbf{\hat{s}}' \rangle \leq \theta_{\max} \| \mathbf{\hat{s}} - \mathbf{\hat{s}}' \|^2, \quad \forall \mathbf{\hat{s}}, \mathbf{\hat{s}}' \in  K(s_0) - \mathbf{s}_{\text{ref}}.
    \]
    This shows that $\mathcal{\hat{J}}(\mathbf{\hat{s}}; \theta)$ is Lipschitz continuous with respect to $\mathbf{\hat{s}}$. Moreover, for any $\theta, \theta' \in \Theta$, we have
    \begin{equation*}
        \begin{aligned}
             &\| \mathcal{\hat{J}}(\mathbf{\hat{s}}; \theta) - \mathcal{\hat{J}}(\mathbf{\hat{s}}; \theta') \|^2 \\
            =& \mathbf{\hat{s}}^{\top}(\Bar{\theta} - \Bar{\theta}')^2\mathbf{\hat{s}} \\
            \leq& \| \mathbf{\hat{s}} \|^2 \max_{i\in\mathcal{C},j\in[6]}(\theta_{i,j} - \theta'_{i,j})^2 \\
            \leq& M \| \mathbf{\hat{s}} \|^2 \| \Bar{\theta} - \Bar{\theta}' \|^2, \forall \mathbf{\hat{s}} \in  K(s_0) - \mathbf{s}_{\text{ref}},
        \end{aligned}
    \end{equation*}
    where $M$ is a positive constant. 
    Hence, $\mathcal{\hat{J}}(\mathbf{\hat{s}}; \theta)$ is Lipschitz continuous with respect to $\theta$. Then, according to Theorem 1 in \cite{maugeri2009global}, there exists a unique VE solution $\mathbf{s}^*(\theta) = \mathbf{\hat{s}}^*(\theta) + \mathbf{s}_{\text{ref}}$ of the variational inequality \eqref{eq:vepro} and the solution is $\gamma_1$-Lipschitz continuous in $\Theta$ where $\gamma_1$ is a positive constant.

    Since $\{s_i^*\}_{i \in \mathcal{C}}$ represents the CAVs' subjective rationalization strategy profile defined in Definition~\ref{def:cav_str}, there exists a strategy $s_{0,C}$ for the HV that satisfies
    \begin{equation*}
        \begin{aligned}
            J_i(s_i^* ; \theta_{i, \text{true}}) &\leq J_i(s_i ; \theta_{i, \text{true}}), \\
            & \forall s_i \in S_i\left( \operatorname{vec}(\mathbf{s}_{\neg i}^*, s_{0,C}) \right), 
            \forall i \in \mathcal{C}.
        \end{aligned}
    \end{equation*}
    Therefore, according to Remark~\ref{rm:use_ve}, $\{s_i^*\}_{i \in \mathcal{C}}$ also is the solution to the following variational inequality
    \begin{equation*}
        \langle \mathcal{J}(s; \{ \theta_{i, \text{true}} \}_{i \in \mathcal{C}}), s - s^* \rangle \geq 0, \quad \forall s \in K(s_{0,C}).
    \end{equation*}
    When the CAVs know the true parameters of the HVs, $\theta_{0,\text{true}}$, they accurately perceive the HV's strategy. In this case, Game~\ref{de:hv_cognition_in_cavs_cognition} is equivalent to Game~\ref{de:hv_cognition}, so we have
    \[
    s_{0,C} = s_0^*.
    \]
    Thus, $\{s_i^*\}_{i \in \mathcal{C}}$ also is the solution to the following variational inequality
    \begin{equation}\label{eq:vi_1}
        \langle \mathcal{J}(s; \{ \theta_{i, \text{true}} \}_{i \in \mathcal{C}}), s - s^* \rangle \geq 0, \quad \forall s \in K(s_0^*).
    \end{equation}
    Since $s_0^*$ is the HV's subjective rationalization strategy defined in Definition~\ref{def:hv_stra}, there exists a strategy profile $\{s_{i,0}\}_{i \in \mathcal{C}}$ of CAVs such that
    \begin{equation}\label{eq:hv-str}
        \begin{aligned}
            J_0(s_0^* ; \theta_{0, \text{true}}) &\leq J_0(s_0 ; \theta_{0, \text{true}}), \\
            &\forall s_0 \in S_0\left( \operatorname{vec}( s_{i,0}, i \in \mathcal{C} ) \right),
        \end{aligned}
    \end{equation}
    and
    \[
        \begin{aligned}
            J_i(s_{i,0} ; \theta_{i, \text{ave}}) &\leq J_i(s_i ; \theta_{i, \text{ave}}), \\ 
            &\forall s_i \in S_i\left( \operatorname{vec}( \mathbf{s}_{\neg i,0}, s_0^* ) \right), \forall i \in \mathcal{C}.
        \end{aligned}
    \]
    According to Remark~\ref{rm:use_ve}, $\{s_{i,0}\}_{i \in \mathcal{C}}$ satisfies the following variational inequality
    \begin{equation}\label{eq:vi_2}
        \langle \mathcal{J}(s; \{ \theta_{i, \text{ave}} \}_{i \in \mathcal{C}}), s - s^* \rangle \geq 0, \quad \forall s \in K(s_0^*).
    \end{equation}
    Recall the above proven result which says that the solution of the variational inequality problem \eqref{eq:vepro} is $\gamma_1$-Lipschitz continuous in $\Theta$. Since $\| \theta_{i, \text{true}} - \theta_{i, \text{ave}} \| \leq \epsilon_c$, combining \eqref{eq:vi_1} and \eqref{eq:vi_2}, we obtain
    \begin{equation}\label{eq:first_ineq}
        \begin{aligned}
            &\| \operatorname{vec}(s_{i,0}, i \in \mathcal{C}) - \operatorname{vec}(s_i^*, i \in \mathcal{C}) \|^2 \\
            \leq& \gamma_1 \| \operatorname{vec}(\theta_{i, \text{ave}}, i \in \mathcal{C}) - \operatorname{vec}(\theta_{i, \text{true}}, i \in \mathcal{C}) \|^2  \\
            \leq& n\gamma_1 \epsilon_c^2,
        \end{aligned}
    \end{equation}
    where $\gamma_1$ is a positive constant.

    From \eqref{eq:hv-str}, the HV’s subjective rationalization strategy $s_0^*$ satisfies
    \[
        J_0(s_0^* ; \theta_{0, \text{true}}) = \min\{ J_0(s_0 ; \theta_{0, \text{true}}) \mid S_0\left( \operatorname{vec}( s_{i,0}, i \in \mathcal{C} ) \right) \}.
    \]
    Therefore, according to Theorem 3.1 in \cite{dempe2015lipschitz}, which establishes the $\gamma_2$-Lipschitz continuity of the optimal value function, we have
    \[
        \begin{aligned}
            &J_0(s_0^* ; \theta_{0, \text{true}}) - \min\{ J_0(s_0 ; \theta_{0, \text{true}}) \mid S_0\left( \operatorname{vec}( s_{i}^*, i \in \mathcal{C} ) \right) \} \\
            \leq &\gamma_2 \| \operatorname{vec}(s_{i,0}, i \in \mathcal{C}) - \operatorname{vec}(s_i^*, i \in \mathcal{C}) \|,
        \end{aligned}
    \]
    where $\gamma_2$ is a positive constant. Moreover, combining the inequality \eqref{eq:first_ineq}, we obtain that
    \[
        \begin{aligned}
            &J_0(s_0^* ; \theta_{0, \text{true}}) - \min\{ J_0(s_0 ; \theta_{0, \text{true}}) \mid S_i\left( \operatorname{vec}( s_{j}^*, j \in \mathcal{C} ) \right) \} \\
            \leq& \gamma_1 \gamma_2 \sqrt{n} \epsilon_c.
        \end{aligned}
    \]
    Therefore, for the strategy profile $\{s_i^*\}_{i\in \mathcal{C}}\cup\{s_0^*\}$ where $\{s_i^*\}_{i\in \mathcal{C}}$ is the CAVs' subjective rationalization strategy profile and $\{s_0^*\}$ is the HV's subjective rationalization strategy, it satisfies
        \begin{equation*}
        \begin{aligned}
            &J_{i}(s_{i}^* ; \theta_{i,\text{true}}) \leq J_{i}(s_{i} ; \theta_{i,\text{true}}), \forall s_{i} \in S_{i}\left( \mathbf{s}_{-i}^* \right), \forall i \in \mathcal{C}, \\
            &J_{0}(s_{0}^* ; \theta_{0,\text{true}}) \leq J_{0}(s_{0} ; \theta_{0,\text{true}}) + L\epsilon_c, \forall s_{0} \in S_{0}\left( \mathbf{s}_{-0}^* \right),
        \end{aligned}
    \end{equation*}
    where $L = \gamma_1 \gamma_2 \sqrt{n}$. By recalling the definition of HNE in Definition~\ref{def:hne}, we obtain that the strategy profile is an HNE under the cognitive threshold $\epsilon_c$ and perceptual threshold $L\epsilon_c$.
\end{proof}

Theorem~\ref{thm:hne} provides a detailed analysis of cognitive stability in the HNE achieved when CAVs successfully learn the parameters of HV. This result underscores the critical role of accurate parameter estimation in ensuring cognitive stability, as it allows CAVs to align their strategies with the actual driving behavior and preferences of HV. By understanding the underlying objectives and constraints of HV, CAVs can anticipate their actions effectively, reducing the potential for conflicts and misunderstandings in mixed traffic environments. 

The following section delves into the methods through which CAVs acquire this knowledge, namely, inverse learning based on observed game trajectories. This process involves leveraging data from past interactions to infer the parameters governing HV's decision-making models. By identifying these parameters, CAVs can reconstruct the subjective games played by HVs and adapt their own strategies accordingly. This capability enables CAVs to proactively plan their actions in a manner that promotes harmony and efficiency in traffic dynamics, thereby contributing to the overall safety and performance of the system.

\section{Inverse Learning-Based Intention Interpretation and Distributed Trajectory Planning}
\label{sec:methods}
In this section, we explore intention recognition and distributed trajectory planning within the multi-level hypergame cognitive framework, distinguishing between offline and online scenarios and utilizing inverse learning techniques. We use the lane-change scenarios commonly used in autonomous driving \cite{10334023}.

We first present the algorithm \texttt{SolveGames}, as shown in Algorithm \ref{alg:solveGame}, which will be used in the subsequent algorithms. \texttt{SolveGames} is a general method for CAVs to solve game problems defined in this paper. 
Due to the generality of Algorithm \ref{alg:solveGame}, the specific meaning of its input and output varies with the problem, so we use $\tilde{(\cdot)}$ to denote general symbols to distinguish them from the notation above. For example, $\tilde{\mathcal{N}}$ can be $\mathcal{N}$ or $\mathcal{C}$, and $\tilde{\theta}_i$ can be $\theta_{i,\text{true}}$ or $\theta_{(i,0),i}$.
Given the parameter  $\tilde{\theta}_i$ of each player $i \in \tilde{\mathcal{N}}$ in the game, CAVs and the RSU collaboratively and distributedly compute the generalized Nash equilibrium $\tilde{\mathbf{s}}$ based on Algorithm \ref{alg:solveGame}. The index $\zeta$ indicates the iteration count. We choose the relative step progress and constraint violation threshold as the stopping criterion \cite{boyd2004convexO}, which is computed and judged by the RSU.
By default, we use reference trajectories to generate the input $\tilde{s}_i^0$ of Algorithm \ref{alg:solveGame}, thus $\tilde{s}_i^0$ is omitted in subsequent calls to Algorithm \ref{alg:solveGame}.

\begin{algorithm}
    \caption{\texttt{SolveGames}}
    \begin{algorithmic}[1] \label{alg:solveGame}
        \REQUIRE{$\tilde{\theta}_i, \tilde{s}_i^0, i \in \tilde{\mathcal{N}}$, maximum iteration times $\zeta _{\max}$}. 
        \ENSURE{Strategy profile $\tilde{\mathbf{s}}$}
        \FOR{$\zeta =0:\zeta _{\max}$}
        \FOR{$i \in \tilde{\mathcal{N}}$}
        \IF{$i==0$}
        \STATE {\it Communication:} The RSU receives  $\tilde{s}_{-0}^{\zeta}$ from CAVs;
        \STATE {\it RSU:}
        \[
        \tilde{s}_0^{\zeta+1} \gets \operatorname{argmin}\{ J_0(\tilde{s}_0;\tilde{\theta}_{0}) \mid \tilde{s}_0\in S_0(\tilde{\mathbf{s}}_{-0}^{\zeta}) \};
        \]
        \STATE {\it Communication:} The RSU sends $\tilde{s}_0^{\zeta+1}$ to CAVs;
        \ELSE
        \STATE {\it Communication:} CAV $i$ receives $\tilde{\mathbf{s}}_{-i}^{\zeta}$ from the RSU and other CAVs;  
        \STATE {\it CAV $i$:}
        \[
        \tilde{s}_i^{\zeta+1} \gets \operatorname{argmin}\{J_i(\tilde{s}_i;\tilde{\theta}_{i})\mid \tilde{s}_i\in S_i(\tilde{\mathbf{s}}_{-i}^{\zeta})\};
        \]
        \STATE {\it Communication:} CAV $i$ sends $\tilde{s}_i^{\zeta+1}$ to the RSU and other CAVs;
        \ENDIF
        \ENDFOR
        \IF{{\it RSU:} stopping criterion are met}
        \STATE \textbf{break};
        \ENDIF
        \ENDFOR
    \end{algorithmic}
\end{algorithm}

We divide the entire interaction process of vehicles on the lane into discrete times of $T$. The following introduces intent recognition and trajectory planning for CAVs in offline and online scenarios respectively.

\subsection{Offline Scenario}
In the offline scenario, the entire interaction process between vehicles is considered as a game, that is, $\mathcal{T}=\{1,2,\dots,T\}$. CAVs first recognize the intention of HVs through offline inverse learning, and then predict and plan their own trajectories.

\subsubsection{Intention Interpretation of HV by CAVs}
As evident from cognitive stability analysis in Section~\ref{sec:stability}, the accuracy of CAVs' perception of HV's weights $\theta_{0,C}$ is crucial for CAVs to achieve HNE and accurately predict HV's trajectory. 
CAVs cannot directly access HV's weights $\theta_{0,\text{true}}$. Therefore, they need to learn these from historical trajectories. This process of learning parameters from equilibrium or optimal solution is referred to as intention interpretation, which is in fact the inverse of Game~\ref{de:hv_cognition_in_cavs_cognition}. The following introduces how CAVs use the KKT-based inverse learning method to get its estimate of the HV parameter's $\theta_{0,C}$ \cite{Chen2023OnlinePI}.

When CAVs have the perfect perception of HV, namely $\theta_{0,C}=\theta_{0,\text{true}}$, Game~\ref{de:hv_cognition} and Game~\ref{de:hv_cognition_in_cavs_cognition} are identical. Therefore, the equilibrium $s_0$ and $s_{i,0}, \forall i \in \mathcal{C}$ from Game~\ref{de:hv_cognition} can be regarded as the ground truth states of $s_{0,C}$ and $s_{(i,0),i}$ from Game~\ref{de:hv_cognition_in_cavs_cognition}, respectively.
We assume that CAVs can observe the trajectory of HV, denoted as \(\hat{s}_0\), which may be a noise-perturbed version of the true trajectory \(s_0\).
Therefore, the intention interpretation problem is defined as Problem~\ref{de:intention_interpretation_problem}.

\begin{problem}
The intention interpretation problem for CAVs regarding the HV is the inverse of Game~\ref{de:hv_cognition_in_cavs_cognition}. The purpose is to get $\theta_{0,C}$ by observing HV's trajectory $\hat{s}_0$.
\label{de:intention_interpretation_problem}
\end{problem}

Specifically, CAVs collaboratively compute $\{s_{(i,0),C} \}_{i \in \mathcal{C}} $, which is the equilibrium strategy of CAVs perceived by HV in CAVs' understanding, using \texttt{SolveGames}($\{ \theta_{i, \text{ave}} \}_{i \in \mathcal{C}} $) in Algorithm \ref{alg:solveGame} by fixing HV's strategy $s_{0,C}$ as $\hat{s}_0$. Therefore, HV's decision model in CAVs' cognition is
\begin{equation} \label{eq:hv_decision_in_cavs_cognition}
    \hat{s}_0 = \argmin \{ J_0(s_0;\theta_{0,C}) \mid s_0 \in S_0(\{s_{(i,0),C} \}_{i \in \mathcal{C}} ) \} + \xi,
\end{equation}
where $\theta_{0,C}$ is HV's weights in CAVs' cognition and $\xi$ is an unknown random noise. By recalling the definition of the constraint set $S_0$ in \eqref{eq:cons}, we get that the KKT conditions of \eqref{eq:hv_decision_in_cavs_cognition} are 
\begin{equation} \label{eq:KKT_conditions}
\left\{
    \begin{array}{l}
      \nabla J_0(\hat{s}_0;\theta_{0,C}) + \nabla g_{0}(\hat{s}_0;\mathbf{s}_{(C,0),C})^{\top}\lambda +  \nabla h_{0}(\hat{s}_0;\mathbf{s}_{(C,0),C})^{\top} \mu =\mathbf{0},  \\
       \lambda \circ g_{0}(\hat{s}_0;\mathbf{s}_{(C,0),C}) =\mathbf{0}, \\
       \lambda \geq \mathbf{0}, \\
       h_{0}(\hat{s}_{0},\mathbf{s}_{(C,0),C}) = \boldsymbol{0}, \\
       g_{0}(\hat{s}_{0},\mathbf{s}_{(C,0),C}) \leq \boldsymbol{0}.
    \end{array}
\right.
\end{equation}
Based on the KKT conditions in \eqref{eq:KKT_conditions} without noises, CAVs can get $\theta_{0,C}$ by solving the following optimization:
\begin{equation} \label{eq:update_opt}
    \begin{split}
        \min\limits_{\theta,\lambda,\mu} & \lVert \nabla J_0(\hat{s}_0;\theta) + \nabla g_{0}(\hat{s}_0;\mathbf{s}_{(C,0),C})^{\top}\lambda +  \nabla h_{0}(\hat{s}_0;\mathbf{s}_{(C,0),C})^{\top} \mu \rVert_2^2, \\
        \mathrm{s.t.} & \lambda \geq \mathbf{0}, \theta \in \Theta,  \\
        &\lambda \circ \min \set{g_{0}(\hat{s}_0;\mathbf{s}_{(C,0),C})+\kappa,\mathbf{0}}=\mathbf{0}.
    \end{split}
\end{equation}
where $\kappa>0$ is a small threshold to handle observation errors. 

We then summarize the above process into the following Algorithm \ref{alg:inverseGame}.

\begin{algorithm}
    \caption{\texttt{IntentionInterpretationOffline}}
    \begin{algorithmic}[1] \label{alg:inverseGame}
        \REQUIRE{HV's trajectory observed by CAVs $\hat{s}_0$}
        \ENSURE{CAVs' cognition $\theta_{0,C}$}
        \STATE Solve Game~\ref{de:hv_cognition_in_cavs_cognition} with fixed $s_{0,C}=\hat{s}_0$: 
        \[
        \{s_{(i,0),C} \}_{i \in \mathcal{C}} \gets \texttt{SolveGames}(\{ \theta_{i, \text{ave}} \}_{i \in \mathcal{C}},\mathcal{C});
        \]
        \STATE $\theta_{0,C} \gets$ Solving optimization problem \eqref{eq:update_opt};
    \end{algorithmic}
\end{algorithm}

\subsubsection{Trajectory Prediction and Planning Method of CAVs}
In this part, we will use the learned intentions to predict HV's trajectory and plan CAVs' trajectory during the actual process.
In the level 2 hypergame, the CAVs consider their perception of HV's decision model, Game~\ref{de:hv_cognition_in_cavs_cognition}, which is used to predict HV's trajectory $s_{0,C}$. Then the CAVs' decision model is given by Problem \ref{de:cavs_cognition}, where the CAVs' perception of themselves is accurate. Therefore, the parameters related to the CAVs in the game are the same as in Game~\ref{de:level_0_hypergame}, while HV's trajectory is fixed as the predicted trajectory $s_{0,C}$ obtained from Game~\ref{de:hv_cognition_in_cavs_cognition}.

\begin{problem}
The trajectory planning game of CAVs is defined as 
\begin{equation}
   s_i = \argminlimits\{ J_{i}\left(s_{i};\theta_{i,\text{true}}\right) \mid s_{i}\in S_i(s_{0,C},\mathbf{s}_{\neg i}) \}, i \in \mathcal{C}.
    \label{eq:cav_decision_problem}
\end{equation}
\label{de:cavs_cognition}
\end{problem}

In summary, the above process can be described as Algorithm~\ref{alg:process}.
\begin{algorithm}
    \caption{\textbf{Predicting and Planning under Different Cog-nition}}
    \begin{algorithmic}[1] \label{alg:process}
        \STATE $\theta_{0,C} \gets$ \texttt{IntentionInterpretationOffline}($\hat{s}_0$);
        \STATE Solve Game~\ref{de:hv_cognition_in_cavs_cognition}:
        \[
        s_{0,C} \gets \texttt{SolveGames}( \{ \theta_{i, \text{ave}} \}_{i \in \mathcal{C}} \cup \{\theta_{0,C} \}, \mathcal{N});
        \]
        \STATE Solve Problem \ref{de:cavs_cognition}:
        \[
        s_i \gets \texttt{SolveGames}(\theta_{i,\text{true}}, \mathcal{C}), i\in\mathcal{C}.
        \]
    \end{algorithmic}
\end{algorithm}

\subsection{Online Scenario}
When encountering a newcome HV, there is no offline data available for intent recognition. In this case, online intent recognition is required. In the following, we consider a multi-stage trajectory planning framework for vehicles within a prediction horizon \( T \).

The time horizon \( \{ 1, 2, \dots, T \} \) is divided into \( \tau > 0 \) sequential segments:
\[
\bigcup_{t=1}^{\tau} \mathcal{T}_t = \{ 1, 2, \dots, T \},
\]
where each subset \( \mathcal{T}_t \) represents a time segment: \( \mathcal{T}_t = \{k_{t-1}, \dots, k_{t}\} \), with \( 1 = k_0 < k_1 < \cdots < k_{\tau} = T \). At each time period \( \mathcal{T}_t \), we use a superscript \( ^t \) to indicate the corresponding games and variables, such as \( G^t \). Thus, the entire trajectory planning problem is modeled as a multi-stage online dynamic game, as illustrated in Figure~\ref{fig:online_scenario}.

\begin{figure}[htbp]
    \centerline{\includegraphics[width=\linewidth, keepaspectratio]{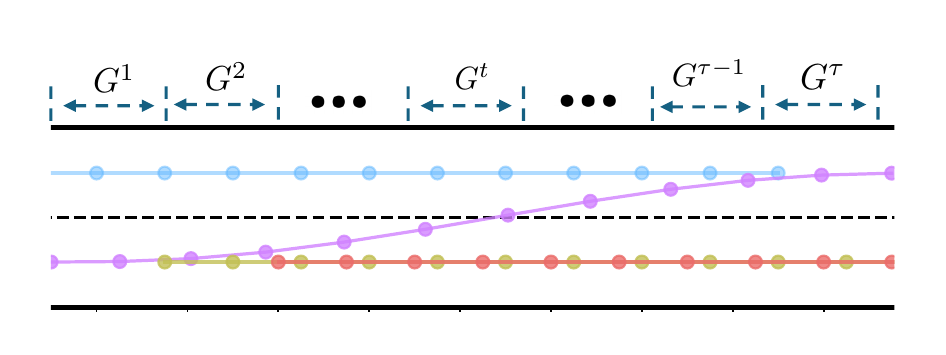}}
    \caption{Illustration of the online scenario with multi-stage trajectory games.}
    \label{fig:online_scenario}
\end{figure}

In the \( t \)-th game \( G^t \), the strategy of vehicle \( i \), denoted \( s_i^t \), is expressed as \( \operatorname{vec}(s_i(k), k \in \mathcal{T}_t) \), excluding the initial state \( x_i(k_{t-1}) \) and the terminal control \( u_i(k_{t}) \). The CAVs' estimate of the HV's true parameter \( \theta_{0,\text{true}} \) at time period \( t \) is denoted as \( \theta_{0,C}^t \).

\subsubsection{Intention Interpretation of HV by CAVs}
At time period \( t \geq 2 \), CAVs observe the HV's trajectory \( \hat{s}_0^{t-1} \) from the previous time period. Specifically, \( \hat{s}_0^{t-1} \) represents the equilibrium strategy \( s_0^{t-1} \) of the HV in \( G_{\text{true},0}^{t-1} \) (Game~\ref{de:hv_cognition}), perturbed by observational noise \( \xi^{t-1} \):
\[
\hat{s}_0^{t-1} = s_0^{t-1} + \xi^{t-1}.
\]
In this game, \( s_0^{t-1} \) satisfies the following conditions:
\begin{equation} \label{eq:1234}
    \begin{aligned}
    J_{0}^{t-1}(s_{0}^{t-1} ; \theta_{0, \text{true}}) &\leq J_{0}^{t-1}(s_{0} ; \theta_{0, \text{true}}), \\
    &\forall s_{0} \in S_{0}^{t-1}\left( \operatorname{vec}( s_{j,0}^{t-1}, j \in \mathcal{C} ) \right),
    \end{aligned}
\end{equation}
\begin{equation} \label{eq:1234fixed}
    \begin{aligned}
        J_{i}^{t-1}(s_{i,0}^{t-1} ; &\theta_{i, \text{ave}}) \leq J_{i}^{t-1}(s_{i,0} ; \theta_{i, \text{ave}}), \\
        &\forall s_{i,0} \in S_{i}^{t-1}\left( \operatorname{vec}( \mathbf{s}_{\neg i,0}^{t-1}, s_{0}^{t-1}  ) \right), \forall i \in \mathcal{C}.
    \end{aligned}
\end{equation}

Given \( \hat{s}_0^{t-1} \), CAVs calculate \( s_{(C,0), C}^{t-1} \) using their distributed computational capabilities and V2X communication. Specifically, they utilize the \texttt{SolveGame} algorithm to solve \eqref{eq:1234fixed}. To refine their cognition of the HV, CAVs update their estimate \( \theta_{0,C}^t \) by solving the following optimization:
\begin{equation} \label{eq:updata_opt}
    \begin{aligned}
        \min_{\theta,\lambda,\mu}& \underbrace{
        \begin{aligned}
            \| \nabla J_0(\hat{s}_0^{t-1};\theta) + \nabla g_{0}(\hat{s}_0^{t-1};\mathbf{s}_{(C,0),C}^{t-1}&)^{\top}\lambda \\
            +  \nabla h_{0}(&\hat{s}_0^{t-1};\mathbf{s}_{(C,0),C}^{t-1})^{\top} \mu \|_2^2
        \end{aligned}
        }_{\mathrm{correctiveness}} \\
        &+ \omega_{dist} \underbrace{\lVert \theta - \theta_{0,C}^{t-1} \rVert_2^2}_{\mathrm{conservativeness}} \\
        &\text{subject~to} \quad \lambda \geq 0, \theta \in \Theta  \\
        &\hspace{1.7cm} \lambda \circ \min \set{g_{0}(\hat{s}_0^{t-1};\mathbf{s}_{(C,0),C}^{t-1})+\kappa,\mathbf{0}}=\mathbf{0},
    \end{aligned}
\end{equation}
where \( \omega_{dist} \geq 0 \) is a weighting factor balancing `correctiveness' and `conservativeness'. The first term in \eqref{eq:updata_opt} ensures that the estimate aligns with observed HV behavior by minimizing deviations from the KKT conditions of \eqref{eq:1234}. The second term penalizes large deviations from the previous estimate, ensuring stability in updates. The parameter \( \omega_{dist} \) controls the trade-off between these competing objectives. The complete intention interpretation process is given in Algorithm~\ref{alg:inverseGame2}.
\begin{algorithm}
    \caption{\texttt{IntentionInterpretationOnline}}
    \begin{algorithmic}[1] \label{alg:inverseGame2}
        \REQUIRE{Cognition $\theta_{0,C}^{t-1}$ at time period $t-1$}
        \ENSURE{New cognition $\theta_{0,C}^{t}$}
        \STATE CAVs observe the trajectory $\hat{s}_0^t$ last stage;
        \STATE $\mathbf{s}_{(C,0),C}^t \gets \texttt{SolveGames}$ ($\{\theta_{i,\text{ave}}\}_{i \in \mathcal{C}}$) by fixing $s_{0,C}^t=\hat{s}_0^t$;
        \STATE $\theta_{0,C}^{t} \gets$ \eqref{eq:updata_opt}.
    \end{algorithmic}
\end{algorithm}

\subsubsection{Trajectory Prediction and Planning Method of CAVs}
After the intention interpretation process, CAVs utilize the learned intentions to predict the HV's trajectory and plan their own trajectories within the time period $\mathcal{T}_t$, similar to the offline scenario. Specifically, the CAVs incorporate their perception of the HV's decision model, defined as Game~\ref{de:hv_cognition_in_cavs_cognition}, to predict the HV's trajectory $s_{0,C}^t$. 

The trajectory prediction is then used as input for the CAVs' trajectory planning process. The decision-making problem for a CAV is formulated as:
\begin{equation}
    s_i^t = \argmin\{J_{i}(s_{i}; \theta_{i, \text{true}}) \mid s_i \in S_i^t(s_{0,C}^t, \mathbf{s}_{\neg i}^t) \}, \quad i \in \mathcal{C},
    \label{eq:cav_decision_problem_t}
\end{equation}
where the set of feasible strategies \(S_i^t\) considers the influence of the predicted HV trajectory $s_{0,C}^t$ and the strategies of other vehicles $\mathbf{s}_{\neg i}^t$. By leveraging V2X communication, CAVs can collaboratively solve this optimization problem in a distributed manner.

The entire online process is summarized in Algorithm~\ref{alg:process_t}.
\begin{algorithm}
    \caption{\textbf{Online Process}}
    \begin{algorithmic}[1] \label{alg:process_t}
        \STATE Initialize HV's parameter in CAVs' cognition $\theta_{0,C}^1$;
        \FOR{$t = 1:\tau$}
            \IF{$t > 1$}
                \STATE Update CAVs' cognition:
                \[
                \theta_{0,C}^t \gets \texttt{IntentionInterpretationOnline};
                \]
            \ENDIF
            \STATE Predict HV's trajectory by solving Game~\ref{de:hv_cognition_in_cavs_cognition}:
            \[
            s_{0,C}^t \gets \texttt{SolveGames}\big(\{\theta_{i, \text{ave}} \}_{i \in \mathcal{C}} \cup \{\theta_{0,C}^t\}\big);
            \]
            \STATE Plan CAVs' trajectories by solving \eqref{eq:cav_decision_problem_t}:
            \[
            s_i^t \gets \texttt{SolveGames}(\theta_{i, \text{true}}), \quad i \in \mathcal{C}.
            \]
        \ENDFOR
    \end{algorithmic}
\end{algorithm}

\section{Experimental Results}
In this section, we examine the performance of CAVs in recognizing, predicting, and interacting with HV during lane-changing scenarios in mixed traffic. Experiments are conducted under both offline and online conditions to ensure a comprehensive evaluation.

\subsection{Experimental Setting}
We evaluate and validate the algorithm's performance using a lane-changing task on a unidirectional, two-lane highway. Fig. \ref{fig:experimental_scenario} shows exemplary reference trajectories for each vehicle, with one HV traveling in the left lane and three CAVs traveling in the right lane. CAV $1$ plans to change lanes to the left, while the other vehicles plan to travel at a constant speed.

\begin{figure}[htbp]
    \centerline{\includegraphics[width=\linewidth, keepaspectratio]{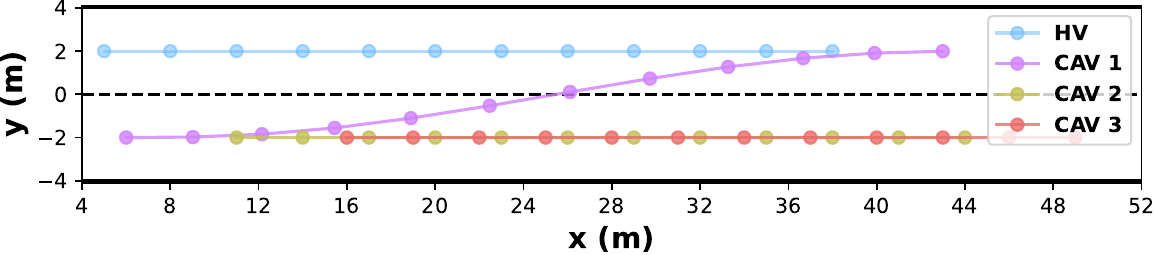}}
    \caption{The experimental scenario and reference trajectories for each vehicle.}
    \label{fig:experimental_scenario}
\end{figure}

In the experiment, the driving styles are classified into three types based on the norms of the components in $\theta$:
\begin{itemize}
    \item Pose-tracking: $\|(\theta_{p_x},\theta_{p_y},\theta_\psi)\|_2$ is the largest, indicating that the vehicle tends to track the positions and heading angles, i.e., the reference poses, in the reference trajectory.
    \item Velocity-consistent: $\theta_v$ is the largest, indicating that the vehicle tends to travel at the reference speed.
    \item Comfort-oriented: $\|(\theta_a,\theta_\delta)\|_2$ is the largest, indicating that the vehicle tends to use smaller control inputs, reflecting a preference for comfort.
\end{itemize}

\begin{table}[htbp]
    \caption{Typical ratios of weights for each driving style type.}
    \begin{center}
    \begin{tabular}{ccc}
        \toprule
        \shortstack{\textbf{Driving}\\ \textbf{Behavior}} & \shortstack{\textbf{Driving}\\\textbf{Style Type}} & \raisebox{1ex}{$\theta_{eff}$} \\
        \midrule
        \multirow{4}{*}{\shortstack{Straight-\\driving}} & Pose-tracking & $(10,1,1)^\top$ \\
        \cmidrule[\cmidruleWidth]{2-3}
                              & Velocity-consistent & $(1,10,1)^\top$ \\
        \cmidrule[\cmidruleWidth]{2-3}
                              & Comfort-oriented & $(1,1,10)^\top$ \\
        \midrule
        \multirow{4}{*}{\shortstack{Lane-\\changing}} & Pose-tracking & $(10,10,1,10,1,1)^\top$ \\
        \cmidrule[\cmidruleWidth]{2-3}
                              & Velocity-consistent & $(1,1,10,1,1,1)^\top$ \\
        \cmidrule[\cmidruleWidth]{2-3}
                              & Comfort-oriented & $(1,1,1,1,1,10)^\top$ \\
        \bottomrule
    \end{tabular}
    \label{tab:style_weight}
    \end{center}
\end{table}

The components of $\theta$ correspond one-to-one with the components of $s_i(k)$. The meanings of each component can be found in the definition of dynamics constraints in Sec. \ref{sec:model_of_game}. For straight-driving and lane-changing vehicles, the typical ratios of weights for each driving style type are shown in Table \ref{tab:style_weight}. In this scenario, straight-driving vehicles' driving behavior constraints cause them to travel along the horizontal line, so their effective weights are $\theta_{eff}=(\theta_{p_x}, \theta_{v}, \theta_{a})^\top$. For lane-changing vehicles, all weights are effective, i.e., $\theta_{eff}=\theta$. We always normalize parameters by $\frac{\theta_{eff}}{\|\theta_{eff}\|_2}$. The parameter settings in simulations are shown in Table \ref{tab:simulation_parameter}.

\begin{table}[htbp]
    \caption{Simulation parameters.}
    \begin{center}
        \begin{tabular}{cccc}
            \toprule
            \textbf{Parameter} & \textbf{Value} & \textbf{Parameter} & \textbf{Value} \\
            \midrule
            \shortstack{Vehicle size\\$L, W$} & \shortstack{$\qty{3.63}{m}$,\\$\qty{1.85}{m}$} & \shortstack{Extended vehicle\\size $L_E, W_E$} & \shortstack{$\qty{3.73}{m}$,\\$\qty{1.95}{m}$} \\
            \cmidrule[\cmidruleWidth]{1-2} \cmidrule[\cmidruleWidth]{3-4}
            \shortstack{Lane width} & $\qty{4}{m}$ & \shortstack{Range of $v$} & $[0, 20]\unit{m/s}$ \\
            \cmidrule[\cmidruleWidth]{1-2} \cmidrule[\cmidruleWidth]{3-4}
            \shortstack{Range of $a$} & $[-8, 2]\unit{m/s^2}$ & \shortstack{Range of $\delta$} & $[-33, 33]\unit{\degree}$ \\
            \cmidrule[\cmidruleWidth]{1-2} \cmidrule[\cmidruleWidth]{3-4}
            \shortstack{Constraint violation\\threshold $\epsilon$} & $\num{1e-3}$ & \shortstack{Discrete\\period $T_s$} & $\qty{0.1}{s}$ \\
            \cmidrule[\cmidruleWidth]{1-2} \cmidrule[\cmidruleWidth]{3-4}
            \shortstack{Relative step\\progress $\epsilon_{\mathrm{step}}$} & $\num{1e-2}$ & \shortstack{Maximum number \\
            of iterations $\zeta_{\max}$} & $\num{100}$ \\
            \bottomrule
        \end{tabular}
        \label{tab:simulation_parameter}
    \end{center}
\end{table}

\subsection{Offline Experiments}
In experiments, we measure the performance of the proposed method from the trajectory of the complete interaction process. Set $T=36$ and set the initial speed of each vehicle as $\qty{10}{m/s}$. The driving styles of the HV and CAVs $1$-$3$ are comfort-oriented, comfort-oriented, velocity-consistent, and pose-tracking, respectively. The observed HV's trajectory $\hat{s}_0$ is generated by adding Gaussian noise with a mean of $0$ to all $p_{x,0}(k)$ in $s_0$. The standard deviation of the Gaussian noise varies from $0.01$ to $0.40$ in increments of $0.01$, with each value tested $50$ times. The position observation error for the HV, defined as $\frac{1}{T}\|\hat{p}_{0}-p_{0}\|_2$, is used as a measure of the noise level, where $p_0=\operatorname{vec}(p_{x,0}\left( k \right) ,p_{y,0}\left( k \right) ),\forall k=2,\dots ,T$ is the position vector. The position observation error represents the average positional error between the observed trajectory and the actual trajectory at each time step. The algorithm's accuracy in learning HV's weights is evaluated using the parameter estimation error $\frac{\|\theta_{eff,0,C}-\theta_{eff,0}\|_2}{\|\theta_{eff,0}\|_2}$, which is the relative error between HV's weights in CAVs' cognition and  HV's actual weights. 

We make CAVs re-predict and re-plan trajectories at the initial moment using the learned parameters. The trajectory prediction error $\frac{1}{T}\|s_{0,C}-s_{0}\|_2$ is defined to measure the accuracy of trajectory prediction, and the position prediction error at each time step $\|p_{0,C}(k)-p_{0}(k)\|_2$ is defined to measure the accuracy of the position prediction in the trajectory. We set a relatively loose $\kappa=1.5$ to avoid misjudgment of the complementary slackness condition in the KKT conditions due to observation noise.

\begin{figure}[htbp]
    \centerline{\includegraphics[width=\linewidth, keepaspectratio]{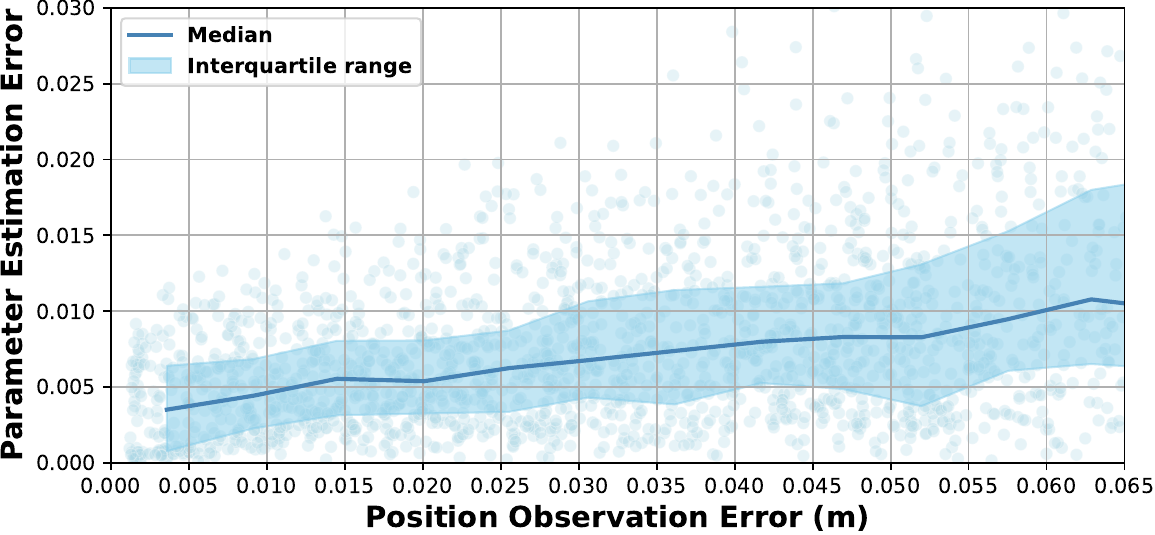}}
    \caption{The distribution of CAVs' parameter estimation errors for the HV under different position observation errors.}
    \label{fig:offline_parameter_position}
\end{figure}

Fig. \ref{fig:offline_parameter_position} presents the variation of parameter estimation errors with position observation errors, where the original data is represented as a scatter plot, the median is indicated by a line, and the interquartile range is visualized by the shaded area between the third and first quartiles. Fig. \ref{fig:offline_trajectory_position} shows how trajectory prediction errors based on learned HV's weights change with position observation errors. It can be seen that the accuracy of the weights learned by the algorithm remains high under the position observation noise, with only a slight decrease as the noise increases. Meanwhile, the trajectory prediction errors are significantly lower than the position observation errors. It is worth mentioning that the trajectory prediction errors include state and control errors, not just position errors, so these results suggest that the proposed method is robust at the trajectory prediction level.


\begin{figure}[htbp]
    \centerline{\includegraphics[width=\linewidth, keepaspectratio]{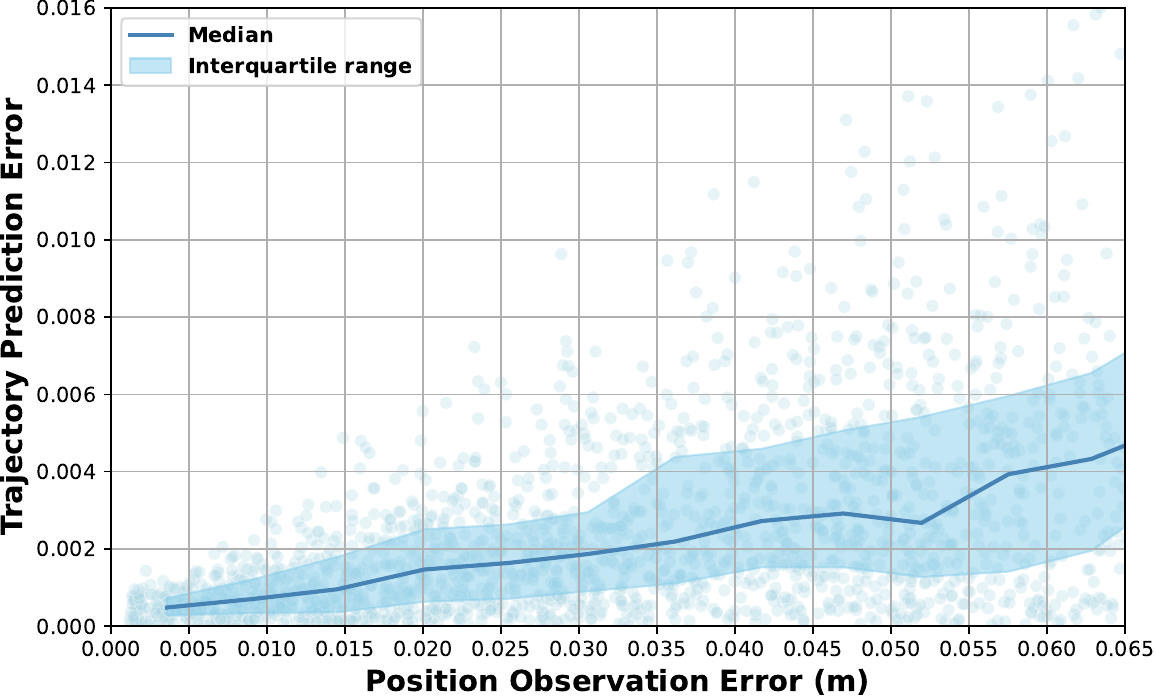}}
    \caption{The distribution of CAVs' trajectory prediction errors for the HV under different position observation errors.}
    \label{fig:offline_trajectory_position}
\end{figure}

Fig. \ref{fig:offline_traj} shows actual trajectories and in cognition of HV and CAV $1$ in one experiment. To make the trajectories distinguishable, the trajectories of CAV $2$ and CAV $3$ are omitted in the figure. It can be seen that the prediction of the trajectory of CAV $1$ in HV's cognition by CAVs is also accurate, while it shows a significant difference from the actual trajectory of CAV $1$, indicating that the proposed method enables CAVs to simulate HV's cognition with high accuracy. Besides, CAVs' position observation errors and position prediction errors for HV at each time step are shown in Fig. \ref{fig:offline_hv_observed_predicted}, indicating that the proposed method can mitigate the influence of observation noise and make the predicted HV's positions more accurate.

\begin{figure}[htbp]
    \centerline{\includegraphics[width=0.75\linewidth, keepaspectratio]{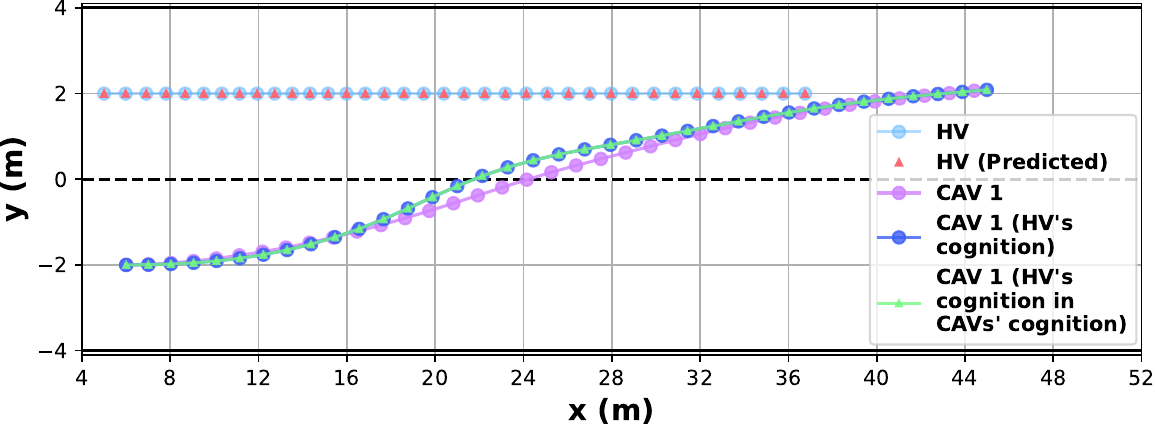}}
    \caption{Vehicles' actual trajectories in fact and in cognition obtained through trajectory generation, prediction, and planning.}
    \label{fig:offline_traj}
\end{figure}

\begin{figure}[htbp]
    \centerline{\includegraphics[width=\linewidth, keepaspectratio]{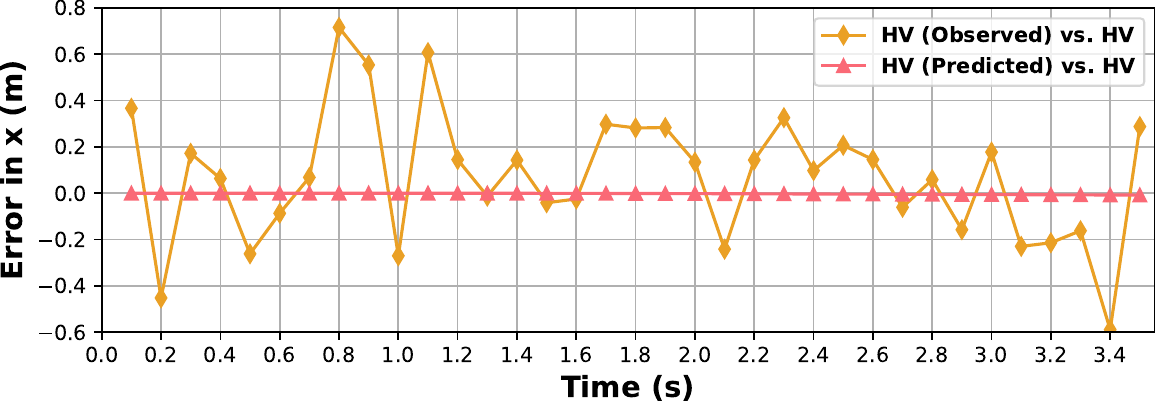}}
    \caption{CAVs' position observation errors and position prediction errors for the HV at each moment.}
    \label{fig:offline_hv_observed_predicted}
\end{figure}

Additionally, we compare the success rate of CAVs' trajectory planning with and without the proposed cognition modeling and intention interpretation algorithm, in order to evaluate the significance of the algorithm in terms of safety. The success rate is the percentage of experiments where both HV and CAVs reach the destination without violating constraints. With the proposed algorithm, we still make CAVs re-predict and re-plan trajectories at the initial moment using the learned parameters, as mentioned before. When CAVs do not use the proposed algorithm, CAVs’ cognition of HV's weight $\theta_{0,C}$ is inaccurate, so in essence, HV and CAVs plan trajectories based on the level 1 hypergame. Specifically, the driving style types of HV and CAVs remain as previously described, with $\theta_{eff,0}=\frac{(1,1,5)^\top}{\|(1,1,5)^\top\|_2}$, and $\theta_{eff,0,C}$ being a random three-dimensional unit vector. The angle between vectors $\theta_{eff,0}$ and $\theta_{eff,0,C}$ follows a uniform distribution $U(\qty{-45}{\degree},\qty{45}{\degree})$. CAV $1$ starts changing lanes at $x=\qty{5}{m}$. Without intention interpretation, the parameter error in CAVs' cognition of HV's weight is defined as $\frac{\|\theta_{eff,0,C}-\theta_{eff,0}\|_2}{\|\theta_{eff,0}\|_2}$, which is the same as the parameter estimation error defined for intention interpretation.

Experimental results show that based on the proposed algorithm, CAVs can safely pass the target location 100 percent. The statistical results of $1,000$ experiments without using the proposed algorithm are shown in Fig. \ref{fig:offline_success}. It can be seen that when HV and CAVs both have misperceptions, CAVs not inferring HV's intention leads to a low success rate. In particular, the success rate remains low even when the parameter error is small. The reason lies in that HV is engaged in a subjective game different from CAVs, where HV's cognition of CAVs' weight $\theta_i^0,i\in\mathcal{N}_{\mathrm{CAV}}$ is biased, and CAVs cannot realize the existence of HV's game when making decisions based on the level 1 hypergame. In this mode, CAVs lack the process of simulating HV's cognition, i.e., Game \ref{de:hv_cognition_in_cavs_cognition} and Problem \ref{de:intention_interpretation_problem}, resulting in a lower success rate. As a consequence, the empirical results demonstrate the superiority of the proposed algorithm regarding the safety of trajectory planning.

\begin{figure}[htbp]
    \centerline{\includegraphics[width=\linewidth, keepaspectratio]{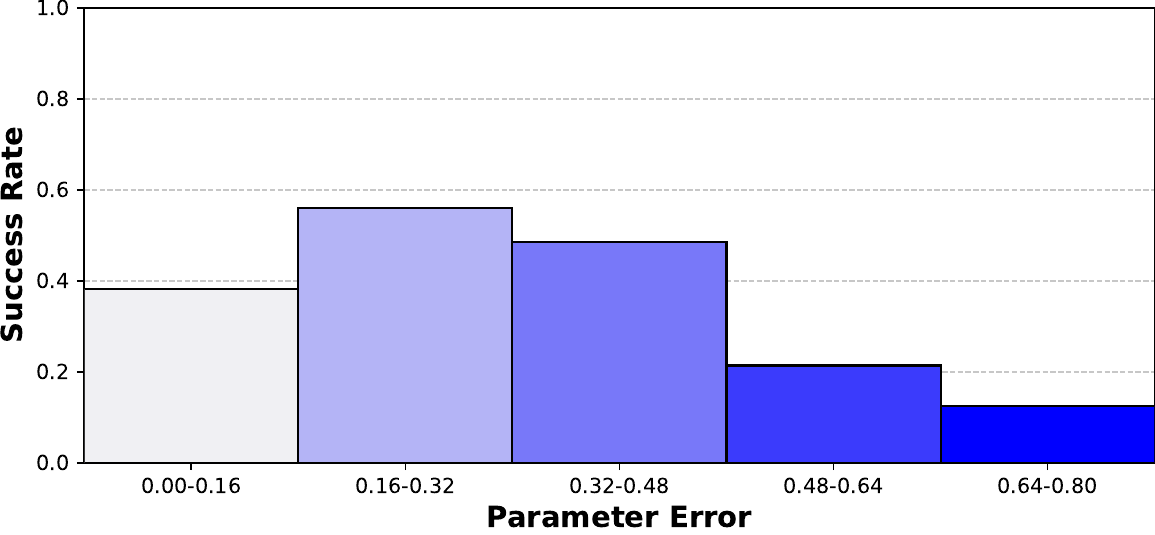}}
    \caption{The success rate of trajectory planning without cognition modeling and intention interpretation under different parameter errors in CAVs' cognition of HV's weight.}
    \label{fig:offline_success}
\end{figure}

\subsection{Online Experiments}
\begin{figure*}[htbp]
    \centerline{\includegraphics[width=0.9\linewidth, keepaspectratio]{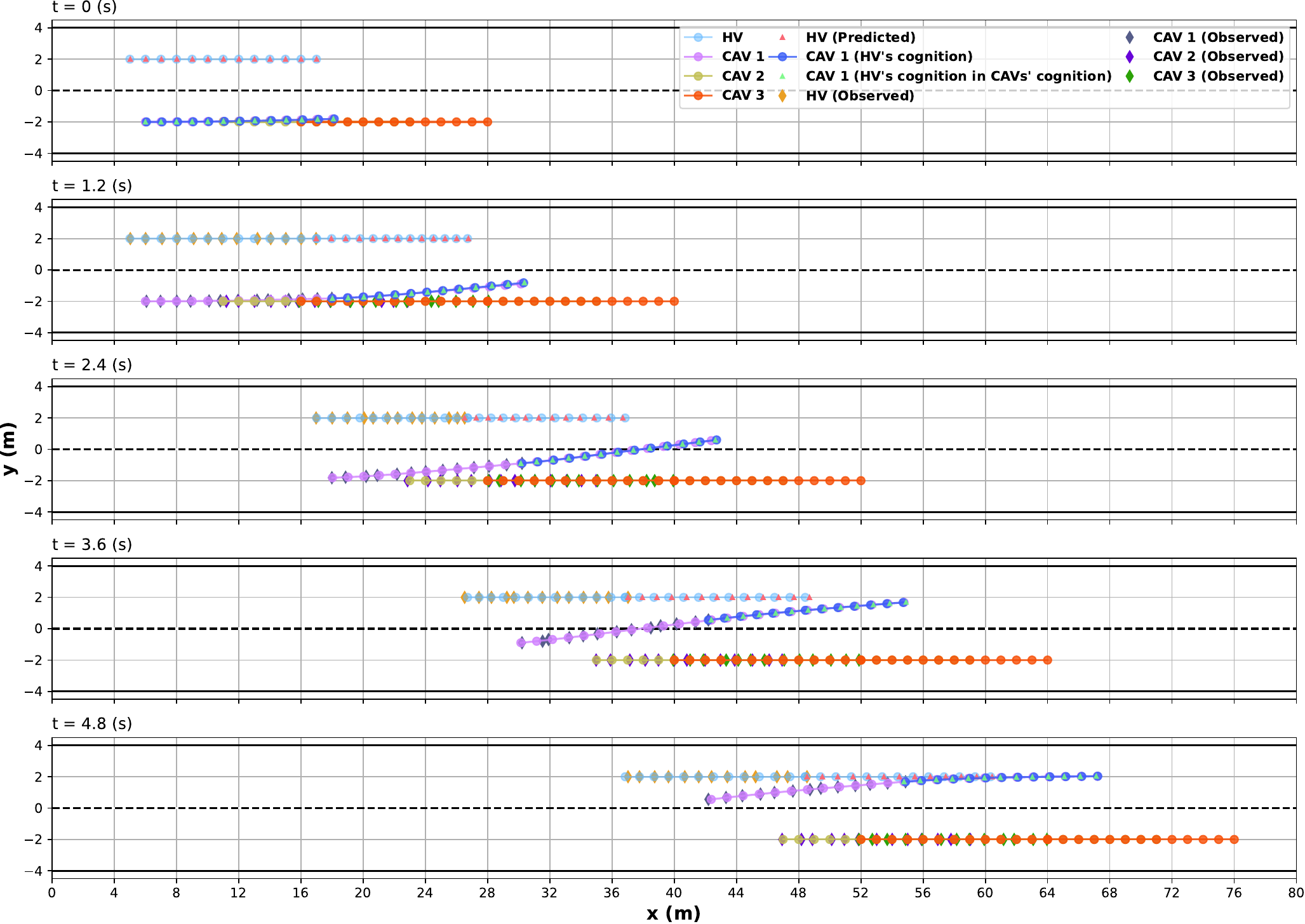}}
    \caption{The trajectories for each vehicle in the online experiment. Those include the actual trajectories of CAVs and HVs across different times, the predicted and observed trajectories of HV as perceived by CAVs, and the trajectories of CAVs as interpreted by HV and their corresponding perceptions by CAVs.}
    \label{fig:online_tra}
\end{figure*}

In the online experiment, we measure the performance of the proposed method in continuously alternating online learning of parameters and decision-making across multiple stages of interaction. The lane-changing process of $\qty{6}{s}$ is divided into five stages of games. The initial $\theta_{0,C}^1$ is set to the typical value of the HV's driving style type. Subsequently, at the beginning of stages $2$ to $5$, CAVs update their estimations of $\theta_{0,C}^t,t=2,3,4,5$, based on the trajectory in the previous stage. The driving style types of HV and CAVs $1$-$3$ are pose-tracking, comfort-oriented, velocity-consistent, and pose-tracking, respectively. The initial speed of each vehicle is $\qty{10}{m/s}$. Both HV and CAVs observe each other's {\it x}-position with Gaussian noise having a mean of $0$ and a standard deviation of $0.05$, while CAVs obtain error-free trajectories through communication. The $x_{\text{ref},i}$ for each stage is obtained by matching the observation points on the complete reference trajectory. In intention interpretation, we set $\kappa=0.3$ and $\omega_{dist}=1$. Due to the limited interaction between CAVs and HVs observed in the first phase, the smoothing term was omitted in the second phase, then the smoothness term is applied at the beginning of stages $3$, $4$, and $5$. The experiment was repeated 50 times.

Fig.~\ref{fig:online_tra} illustrates the experimental scenario and reference trajectories of each vehicle under the online case. In this scenario, the observed trajectories of HV, CAV $1$, CAV $2$, and CAV $3$ are shown at various time steps (e.g., $t = 0$, $1.2$, $2.4$, $3.6$, and $4.8$ seconds). The figure highlights the evolving interactions between the vehicles, where the predicted trajectories align closely with the observed trajectories over time. This demonstrates the effectiveness of the proposed method in real-time applications, providing accurate and reliable trajectory predictions.

Fig.~\ref{fig:online_parameter_trajectory_time} shows the parameter estimation errors and trajectory prediction errors at different times, where parameter estimation errors use the left vertical axis and trajectory prediction errors use the right vertical axis. In the first stage of the game, since CAV $1$ has a small lateral displacement and no collision risk with HV, there is no interaction between the two, and HV travels at a constant speed along the reference trajectory. At this time, for any $\theta_0$, we have $J_0(s_0; \theta_0)=0$. Therefore, CAVs cannot learn the correct weights at $\qty{1.2}{s}$. In the second stage, after the interaction occurs, the parameter estimation error significantly decreases and remains below $\num{0.04}$. Because of the smoothness term, the parameter estimation accuracy at the end of the fourth stage, where interaction is reduced, still maintains the accuracy of the dense interaction in stages $2$ and $3$. The trajectory prediction error also shows a downward trend as the interaction progresses. The experimental results indicate that the proposed method can effectively identify HV's intention during online interaction.

\begin{figure}[htbp]
    \centerline{\includegraphics[width=\linewidth, keepaspectratio]{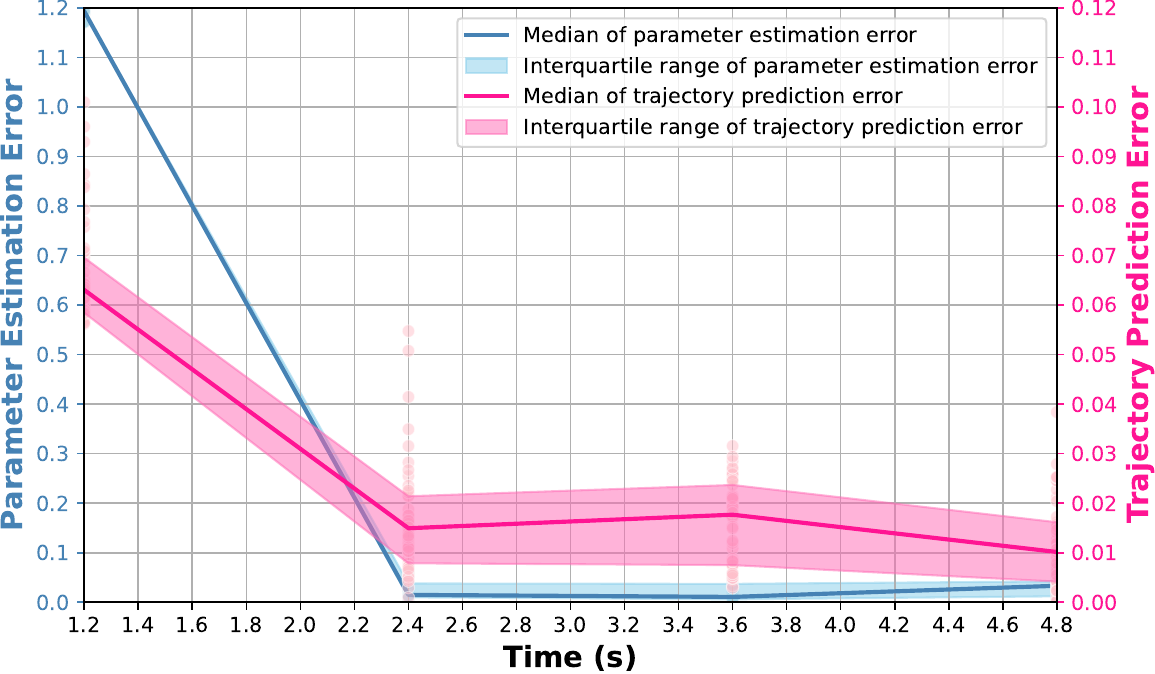}}
    \caption{The distribution of CAVs' parameter estimation errors and trajectory prediction errors for the HV at different moments in online experiments.}
    \label{fig:online_parameter_trajectory_time}
\end{figure}

Comparing the results of Fig.~\ref{fig:online_parameter_trajectory_time} with those of Figs.~\ref{fig:offline_parameter_position} and \ref{fig:offline_trajectory_position}, we can see that the error in online experiments is slightly greater than that in offline experiments. The main reasons
are as follows. Firstly, the prediction horizon of a single game in online experiments is shorter, resulting in a smaller amount of data for learning the weights. Secondly, in online experiments, both HV and CAVs' observations are affected by noise, and the reference trajectory is also obtained by matching noisy observed positions, thus generating additional errors and significantly affecting trajectory prediction.

\begin{figure}[htbp]
    \centerline{\includegraphics[width=\linewidth, keepaspectratio]{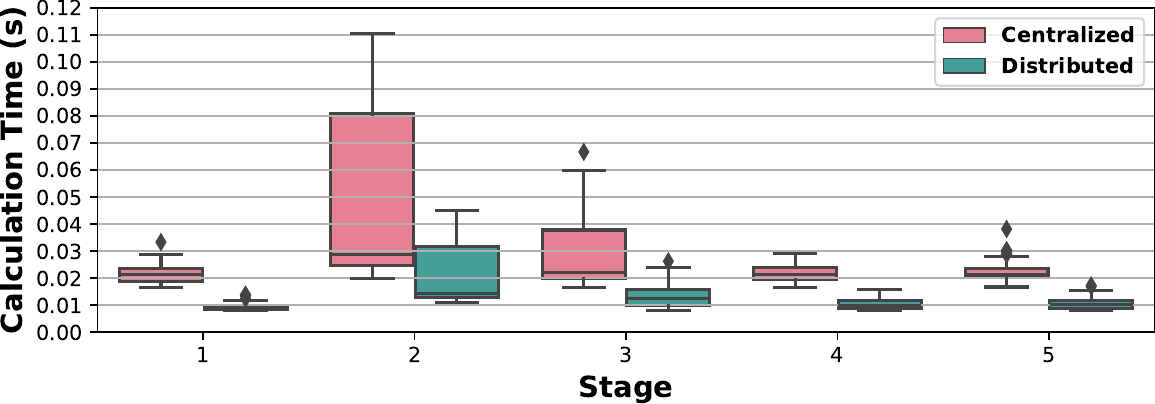}}
    \caption{The computation time of the algorithm's centralized and distributed implementations in each stage in online experiments.}
    \label{fig:online_time}
\end{figure}

Finally, we evaluate the computation time of the algorithm. In particular, we compare the time taken by the proposed distributed algorithm in parameter learning, trajectory prediction, and trajectory planning with its centralized implementation. The distributed implementation of the algorithm is synchronous, with the time determined by the slowest CAV. The centralized implementation of the algorithm refers to the entire computation process of game-solving and intention interpretation being executed by a single CAV or RSU. The program runs on a desktop computer that has Windows 11 installed, an Intel Core i5-10400F CPU, and 16GB of RAM. The time taken by the algorithm in each stage is shown in Fig. \ref{fig:online_time}. It can be seen that the distributed algorithm is significantly more efficient than the centralized algorithm.

\section{Conclusion}
In this paper, we developed a novel framework for the intention interpretation and trajectory planning of HVs within a mixed traffic environment of CAVs. Firstly, we modeled human bounded rationality by incorporating cognitive and perception limitations. Then we proposed a hierarchical cognition modeling method based on hypergame theory to simulate the cognitive relationships between HVs with imprecise cognition and CAVs. To estimate the objective function parameters of HVs, we designed a KKT-based distributed inverse learning algorithm leveraging vehicle-road coordination. Furthermore, we analyzed the cognitive stability of the system and proved that the strategy profile where all vehicles adopt cognitively optimal responses constitutes a hyper Nash equilibrium when CAVs successfully learn the true parameters of HVs (Theorem \ref{thm:hne}). In addition, we extended the intention interpretation and trajectory planning methods to online scenarios, enabling real-time prediction and decision-making. Finally, we conducted simulations in highway lane changing scenarios to demonstrate the accuracy, robustness, and safety of the proposed methods. The results confirmed that our approachcan effectively learn parameters and predicted HV trajectories in both offline and online scenarios, even under noisy observation conditions. Hence, these findings highlighted the potential of our framework to enhance safety and efficiency in mixed traffic systems.




%





\ifCLASSOPTIONcaptionsoff
  \newpage
\fi





\bibliographystyle{IEEEtran}
\bibliography{IEEEabrv,Bibliography}

\vfill


\end{document}